\theoremstyle{plain}
\newtheorem{lemma}{Lemma}[section]
\newtheorem{proposition}{Proposition}[section]
\theoremstyle{remark}
\newtheorem{decisionrule}{Decision Rule}
\newtheorem{remark}{Remark}[section]
\newtheorem{definition}{Definition}[section]
\newcommand{\btheta}{\boldsymbol{\theta}}
\newcommand{\bxi}{\boldsymbol{\xi}}
\newcommand{\bpsi}{\boldsymbol{\psi}}
\newcommand{\bkappa}{\boldsymbol{\kappa}}
\begin{document}

\begin{frontmatter}
%%%%%%%%%%%%%%%%%%%%%%%%%%%%%%%%%%%%%%%%%%%%%%
%%                                          %%
%% Enter the title of your article here     %%
%%                                          %%
%%%%%%%%%%%%%%%%%%%%%%%%%%%%%%%%%%%%%%%%%%%%%%
\title{Risk-aware product decisions in A/B tests with multiple metrics}
%\title{A sample article title with some additional note\thanksref{T1}}
\runtitle{Combining multiple hypothesis tests into one decision}
%\thankstext{T1}{A sample of additional note to the title.}

\begin{aug}
%%%%%%%%%%%%%%%%%%%%%%%%%%%%%%%%%%%%%%%%%%%%%%%
%% ORCID can be inserted by command:         %%
%% \orcid{0000-0000-0000-0000}               %%
%%%%%%%%%%%%%%%%%%%%%%%%%%%%%%%%%%%%%%%%%%%%%%%
\author[A]{\fnms{M\aa rten}~\snm{Schultzberg}\ead[label=e1]{mschultzberg@spotify.com}},
\author[A]{\fnms{Sebastian}~\snm{Ankargren}\ead[label=e2]{sebastiana@spotify.com}}
\and
\author[A]{\fnms{Mattias}~\snm{Fr\aa nberg}\ead[label=e3]{mfranberg@spotify.com}}
%%%%%%%%%%%%%%%%%%%%%%%%%%%%%%%%%%%%%%%%%%%%%%
%% Addresses                                %%
%%%%%%%%%%%%%%%%%%%%%%%%%%%%%%%%%%%%%%%%%%%%%%
\address[A]{Experimentation Platform team, Spotify,
Stockholm, Sweden \printead[presep={\ }]{e1}.}
\end{aug}

\begin{abstract}
In the past decade, A/B tests have become the standard method for making product decisions in tech companies. They offer a scientific approach to product development, using statistical hypothesis testing to control the risks of incorrect decisions. Typically, multiple metrics are used in A/B tests to serve different purposes, such as establishing evidence of success, guarding against regressions, or verifying test validity. 
To mitigate risks in A/B tests with multiple outcomes, it's crucial to adapt the design and analysis to the varied roles of these outcomes. This paper introduces the theoretical framework for decision rules guiding the evaluation of experiments at Spotify. First, we show that if guardrail metrics with non-inferiority tests are used, the significance level does not need to be multiplicity-adjusted for those tests. Second, if the decision rule includes non-inferiority tests, deterioration tests, or tests for quality, the type II error rate must be corrected to guarantee the desired power level for the decision. We propose a decision rule encompassing success, guardrail, deterioration, and quality metrics, employing diverse tests. This is accompanied by a design and analysis plan that mitigates risks across any data-generating process.The theoretical results are demonstrated using Monte Carlo simulations.
\end{abstract}

\begin{keyword}
\kwd{type I and II errors}
\kwd{power analysis}
\kwd{decision rules}
\kwd{multi-outcome decisions}
\kwd{A/B tests}
\end{keyword}

\end{frontmatter}
%%%%%%%%%%%%%%%%%%%%%%%%%%%%%%%%%%%%%%%%%%%%%%
\section{Introduction}
Randomized experiments are the gold standard for providing evidence for causal relationships. Modern technology companies use A/B tests, a randomized controlled trial in a digital setting, extensively to evaluate the efficacy of new changes to their products. These products include ride-sharing apps, search engines, streaming services, recommendations, and more. Ultimately, the goal of these experiments is to decide whether or not to release a product change more widely.

Most of the literature on statistical inference for randomized experiments focuses on a single hypothesis test of a single outcome, and how to bound the type I and type II error rates for that test. However, experiments are not univariate tests of isolated outcomes. Instead, the risks that matter are the risks of making the incorrect decision for the product. For example, at a tech company like Spotify, we want to limit how often we release product changes that show an improvement when there truly is none, and how often we refrain from releasing changes that lead to improvements but we fail to find. These types of decisions typically include results from several hypothesis tests. 
Experiments usually involve results for multiple outcomes, and making a single decision based on these multiple outcomes can be challenging. For example, some of the outcomes, what we will refer to as `metrics', may show improvements, while others show none or even negative effects.

In the online experimentation literature, the only aspect of multi-test decision making that is extensively covered is multiple-testing correction. Multiple-testing corrections, such as Bonferroni, Holm \citep{holm1979} and Hommel \citep{hommel1988}, bound the type I error rate of an implied decision rule that declares what decision you will make based on the results of the individual hypothesis tests. As we will discuss extensively in this paper, unless your desired decision rule matches the rule implied by the multiple-testing correction, it is typically incorrect.   

In this paper, we show how it is possible to formalize the decision-making process of experiments without leaving the standard hypothesis testing framework. The key to ensuring that you obtain the intended risk bounds for the product decision is to explicitly specify a \textit{decision rule}. A decision rule exhaustively specifies what product decision you will make based on the results of your experiment. Importantly, to bound the risks of making an erroneous decision, the design and analysis of your experiment must be closely aligned with the decision rule.

Articulating the decision rule is important for several reasons. Being unclear about what outcomes lead to a positive product decision means that there is no mechanism for properly controlling the risks of the experiment at the level that matters to the company, namely the decision to ship the feature or not. Additionally, a lack of an articulated and standardized decision rule can mean that different teams or parts of the organization hold themselves to different standards. Our decision rule framework is a simple but effective approach for combating those issues.

The decision rule framework helps standardize the analysis of experiments and is a useful tool for experimentation platforms. What the decision rule includes can be made more or less flexible. For example, new experiments can be forced to demonstrate that important company metrics are not negatively impacted while selecting the set of metrics that should show an improvement is made completely up to the experimenter. Even if the choice of metrics is completely arbitrary with no metrics made mandatory by the platform, the decision rule approach promotes a shared understanding of what a successful experiment is.

Throughout this paper, and without loss of generality, we only consider experiments with two groups to simplify notation.
In addition, we only consider one-sided tests, although more than one one-sided test might be applied to each metric. We limit ourselves to one-sided tests as there must be an intended direction for a change in the metric to map to a measurable improvement in the product. For simplicity, we assume that all metrics improve when they increase. Moreover, we assume that each statistical hypothesis test is valid and achieves its type I and type II error rates exactly if the experiment is designed accordingly. 

\subsection{Related literature}
Decision theory is a formal mathematical framework for formalizing decision problems under uncertainty, see e.g. \cite{berger2013decisiontheory} for an introduction. Although this theory is comprehensive and flexible, it is non-trivial for most people, and it moves the decision problem far from the (to most experimenters) familiar hypothesis-testing realm. Since modern tech companies are often having many teams experimenting independently, it is not plausible to have decision theory experts available to all. 

Another popular alternative for decision making in A/B tests where the goal is to evaluate using several outcomes is to use a so-called overall evaluation criterion (OEC), see e.g. \cite{liu2022OEC} for a recent introduction. An OEC removes the problem with several possibly contradictory results by using just a single metric. This metric can either be a metric that serves as a proxy for all the necessary aspects of the important outcomes, or it is a function, like a linear combination, of a selected set of metrics. Designing an appropriate OEC generally requires prolonged research and strong alignment within the organization. For larger companies, a single OEC may not even suffice because the business itself is diverse. Moreover, if the OEC is a complicated function of several outcomes, it can be difficult for experimenters to understand their results. Even when an OEC is used, it is not common that this metrics include all quality tests and metrics that should not deteriorate. That is, an OEC is typically a metric that trades off various outcomes to define success, but as we will show in the paper, a product decision rule can also explicitly include the efforts to avoid end-user harm and experiment invalidity which in turn affects the power analysis and design of the experiment.  

In the clinical trial literature (for an overview, see ~\citealp[ch. 4]{dmitrienko2009};~\citealp{Offen2007}), so-called multiple endpoint experiments, are experiments with more than one outcome (metric). The endpoints can be both efficacy (success metrics) and safety endpoints (guardrail metrics). In some trials, there are also primary endpoints and secondary endpoints, where the secondary endpoints are only evaluated if primary endpoints are significantly changed. Clearly, this setting closely resembles the online experimentation setting. Similarly to deciding whether a new drug or treatment is safe and effective, deciding whether to ship a new feature is a composite decision that involves a potentially complex interplay between all endpoints. Various experimental design and analysis methods have been applied in the clinical trial setting \citep{neuhauser2006}: hierarchical testing where primary endpoints are tested before secondary \citep{dmitrienko2003}, global assessment measures where the endpoints are aggregated within each patient first then analyzed with standard statistical methods \citep{obrien1984}, closed testing where a global null is tested first before proceeding to more specific hypotheses \citep{lehmacher1991}. However, at companies like Spotify, there's a strong need to standardize the design, analysis and decision process of multiple endpoint experiments to allow non-statisticians to evaluate product changes in a safe and efficient way.

%In addition, there are various ways to optimize the analysis by more powerful multiple testing correction or taking the endpoint covariance structure into account \citep{Kong2004}. 

\section{Types of metrics and their hypotheses}
 Throughout the paper, we assume that the aim of the statistical analysis of an experiment is to make a binary decision regarding the success of the treatment. This decision can be e.g. whether or not to go ahead with the next step of validation for a medical treatment, or whether or not to ship a product change. Our goal is to bound the type I and type II error for this decision in repeated experimentation. We will focus on the product decision example, but the results are applicable in a wider, more general setting.

\subsection{Types of metrics}
In modern online experimentation, experiments are evaluated using multiple metrics. Based on the results from each metric, typically estimates of the average treatment effects, the experimenters make a decision whether to ship the feature more widely. The heuristics underlying the decision-making process are seldom transparent. In large organizations, these decision processes often vary substantially from team to team. At Spotify, we have introduced a standardized way of providing a recommendation for a suggested course of action given the outcomes for a set of metrics that belong to different categories. We call this recommendation a shipping recommendation. These recommendations are powered by a decision rule that includes four types of metrics and their associated hypotheses and tests, given by
\begin{enumerate}
    \item \textbf{Success metrics}. Metrics that we aim to improve, tested with superiority tests.
    \item \textbf{Guardrail metrics}. Metrics that we do not want to see deteriorate more than a certain threshold, tested with non-inferiority tests.
    \item \textbf{Deterioration metrics}. Metrics that should not deteriorate, tested with inferiority tests. 
    \item \textbf{Quality metrics}. Metrics that verify the integrity and validity of the experiment itself.
\end{enumerate}
A metric can belong to multiple categories. For example, at Spotify, all success and guardrail metrics also belong to deterioration metrics. We will elaborate on the details of this in later sections, but the implication is that we monitor all metrics for regressions, even if our goal and hypothesis is for them to improve. Quality metrics are not always metrics in the traditional sense. For example, a crucial experiment-quality metric is the sample ratio mismatch test \citep{Fabijan2018}. This is not a metric in the traditional sense, but rather a goodness of fit test of proportions. 
\begin{table}[hbt!]
    \centering
    \begin{tabular}{rll}
    \toprule
    & & Direction used \\
       Metric type  & Metric name  &  in test \\
         \midrule
    Success & Music minutes played & Increase \\
    Guardrail & Podcast minutes played & Increase  \\
    \midrule
    Deterioration & Music minutes played & Decrease \\
    Deterioration & Podcast minutes played & Decrease  \\
    Deterioration & Share of users with a crash & Increase \\
    \midrule
    Quality & Sample ratio mismatch & N/A\\
        \bottomrule
    \end{tabular}
    \caption{Example set of metrics for an experiment. The success and guardrail metrics in the experiment appear a second time as deterioration metrics to ensure they are not unknowingly moving in the wrong direction. }
    \label{tab:metric_example}
\end{table}
Table \ref{tab:metric_example} shows an example of a set of metrics used in an experiment. In this case, the experiment attempts to increase the minutes played of music, but includes podcast minutes played to verify that overall consumption does not increase at the expense of podcast consumption. Both metrics are also included as deterioration metrics to ensure that they are not moving in the direction that is opposite from what we would expect and hope. Additionally, the share of users that experience a crash is included as deterioration metric. The only quality metric in this case is a sample ratio mismatch metric.

In the following, we use "inferiority test" to equivalently mean "deterioration test", which, given our standarization that a decrease is a regression, is a test for the deterioration of a metric. With some abuse of language, we sometimes say "metric $x$ was significantly superior" to mean that the "treatment was significantly superior to control with respect to metric $x$".

\subsection{Hypotheses for different types of metrics}
The different categories of metrics serve different purposes, which by extension means that their associated statistical hypotheses are different. Table \ref{tab:hypothesis_general} displays the hypotheses for these categories of metrics used in the decision rules considered in this paper.
\begin{table*}[hbt]
    \centering
    \begin{tabular}{r|llll}
    \toprule
    & && Effect used for& \\
       Metric  & Null  & Alternative &  power analysis & Status quo\\
         \midrule
        Success & $\delta \leq 0$  & $\delta > 0$ & $\delta = MDE$&  $\delta = 0$\\
        Guardrail & $\delta \leq -NIM$ & $\delta > -NIM$ &$\delta = 0$&$\delta = 0$\\  
        Deterioration & $\delta \geq 0$ & $\delta < 0$ & N/A  & $\delta = 0$\\
        \bottomrule
    \end{tabular}
    \caption{Hypotheses for the the three main types of metrics considered in the decision rules, where $\delta$ is the estimand, like the average treatment effect, of interest. The hypotheses of the quality tests are left out because they typically are not using the same kind of estimands as the others with varying hypotheses as a consequence. The minimum detectable effect (MDE) is the effect size used for success metrics when designing the experiment. The non-inferiority margin (NIM) is the tolerance level of regression used in the non-inferiority tests used for guardrail metrics. Status quo is not a hypothesis in the traditional sense, but a scenario of interest later in the paper. }
    \label{tab:hypothesis_general}
\end{table*}
While the hypotheses are similar–––and to some degree even opposites of one another–––they give rise to distinct interpretations. For example, the alternative of the non-inferiority test for which we design the experiment to be powered is the null hypothesis for the superiority test, which means that under the null hypothesis a guardrail metric has deteriorated by $NIM$ (non-inferiority margin). We also consider a third hypothesis-like scenario, the "status quo" in which no metric has moved, to facilitate our discussion. The status quo scenario is thus under the alternative for the non-inferiority and under the null hypothesis for the superiority tests and deterioration tests. 

\section{Type I and Type II error rates for decision rules including superiority and non-inferiority tests }
The categorization of metrics into success, guardrail, deterioration and quality metrics naturally paves the way for a decision rule that combines the results of metrics in each category appropriately. The decision rule, in turn, implies various multiple-testing corrections for the statistical inference to control both the type I and type II error rates of the shipping decision as intended. In this section, we start by establishing some fundamental results for superiority and non-inferiority testing, and, more generally, for union-intersection and intersection-union testing that these rely on. These results are then used to construct a decision rule that includes superiority and non-inferiority tests. In subsequent sections, we include deterioration tests and finally quality metrics. 

The goal of the designs in this paper is to bound the family-wise error rates of the decision. In the following, we focus our discussion on Bonferroni-based adjustments for multiple comparisons that let us use results of individual tests to evaluate the joint global hypothesis. There are two main contributing factors to why we choose this approach: 
\begin{enumerate}
    \item Interpretability is greatly simplified when experimenters can view individual results for metrics (including confidence intervals).
    \item By evaluating a global hypothesis consisting of individual of hypotheses for multiple metrics through individual tests, we can fit our framework into a large, scalable experimentation platform where a decision rule approach fits seamlessly into other core experimentation functionality like sequential testing, variance reduction, and more.
\end{enumerate}

While other multiple-testing adjustment procedures often are more powerful, they generally achieve more power at the expense of interpretability and ease of understanding. For example, Holm's multiple correction method \citep{holm1979} is generally more powerful than Bonferroni's, but the associated confidence intervals under are more complicated and do not always yield finite bounds \citep{guilbaud2014sharper}. Even when confidence intervals are available, explaining the correction and how it affects the intervals to experimenters is non-trivial.
\subsection{The composite hypotheses of superiority and non-inferiority tests}
To facilitate our discussion about overall decision rules, we first describe the hypotheses for the superiority and non-inferiority tests used for success and guardrail metrics.
Let $H_0^{(\mathcal{L})}$ be the null hypothesis for a set of tests, where the superscript indicates the type of the tests. We occasionally drop the superscript when it is given by the context. Let $H_\mathcal{A}^{(\mathcal{L})}$ be the alternative hypothesis.
\begin{definition}[At-least-one testing for superiority]
A composite hypothesis for superiority that is rejected if at least one of the $S$ subhypotheses is rejected is described by
\begin{align*}
    H_0^{(\mathcal{S})}: \bigcap_{i=1}^S H_{0, i}^{(\mathcal{S})} \text{ versus } H_{\mathcal{A}}^{(\mathcal{S})}: \bigcup_{i=1}^S H_{\mathcal{A}, i}^{(\mathcal{S})}
\end{align*}
where 
\begin{align*}
    H_{0, i}^{(\mathcal{S})}: \delta_i \leq 0 \text{ versus } H_{\mathcal{A}, i}^{(\mathcal{S})}: \delta_i > 0
\end{align*}\label{def:sup}
\end{definition}
Definition \ref{def:sup} says that we reject the global null hypothesis if \textit{any} of the subhypotheses are rejected. This is the decision rule that most multiple testing corrections bound the type I error rate for, including the Bonferroni and Sidak corrections. In other words, if any metric improves significantly, we ship the product change.

\begin{definition}[All-or-none testing for non-inferiority]
A composite hypothesis for non-inferiority that is rejected if all the $G$ subhypotheses is rejected is described by
\begin{align*}
    H_{0}^{(\mathcal{G})}: \bigcup_{j=1}^G H_{0,j}^{(\mathcal{G})} \text{ versus } H_{\mathcal{A}}^{(\mathcal{G})}: \bigcap_{j=1}^G H_{\mathcal{A}, j}^{(\mathcal{G})}.
\end{align*}
where 
\begin{align*}
    H_{0, j}^{(\mathcal{G})}: \delta_j \leq -\epsilon_j \text{ versus } H_{\mathcal{A}, j}^{(\mathcal{G})}: \delta_j > -\epsilon_j.
\end{align*}
\label{def:noninf}
\end{definition}
Definition \ref{def:noninf} says that we will reject the global null hypothesis only if \textit{all} subhypotheses are rejected. This kind of global hypothesis, although it is arguably natural for non-inferiority tests, has not received much attention in the online experimentation literature.  

The two testing procedures outlined in Definition \ref{def:sup} and \ref{def:noninf} are both well-studied in statistics, where they are known as union-intersection (UI) and intersection-union (IU) testing, respectively \citep{dmitrienko2009}. 

\subsection{Bounding the type I and type II error rates for UI and IU testing}

Because the at-least-one testing for superiority is based on the UI testing principle, individual tests must be carried out at an adjusted significance level to ensure that the family-wise error rate is bounded by the nominal significance level. On the other hand, no adjustment is necessary to the nominal power level used when designing the experiment. The at-least-one testing principle implies that the composite test has a family-wise type II error rate that is bounded from above by the nominal level. For the all-or-none testing used to establish non-inferiority, it is not necessary to multiplicity adjust the individual significance level used in the tests. 

The error bounding of the decision rules that we propose in this paper heavily relies on the bounding of error rates for UI and IU testing. Therefore, we establish this more formally below. Let $\theta$ be the parameter of interest, and let $\theta_0$ be the parameter value under $H_0$ and let $\theta_\mathcal{A}$ be the parameter value under $H_\mathcal{A}$ for which the power is $1-\beta$.

For tests that rely on the UI principle, only the individual type I error rate needs to be corrected to bound the overall type I and type II error rates. We formalize this in Lemma \ref{lemma:sig_pow_union_intersec}. 
\begin{lemma}[Significance and power level adjustments for union-intersection testing.]\label{lemma:sig_pow_union_intersec}
Consider a UI testing problem of $M$ hypotheses on the form
\begin{align*}
    H_0: \bigcap_{i=1}^M H_{0, i} \text{ versus } H_{\mathcal{A}}: \bigcup_{i=1}^M H_{\mathcal{A}, i}
\end{align*}
with target nominal type I and type II error rates $\alpha$ and $\beta$. Let $\btheta \in \mathbb{R}^M$ be the $M$-dimensional parameter being tested, and let $\Theta_0$ and $\Theta_{\mathcal{A}}$ be the parameter spaces corresponding to the null and alternative hypotheses, respectively, such that the rejection regions of the tests satisfy
\begin{align*}
    \mathrm{Pr}(H_{0,i} \text{ is rejected} \mid \btheta \in \Theta_0)&=\frac{\alpha}{M}\\    
    \mathrm{Pr}(H_{0,i} \text{ is rejected} \mid \btheta \in \Theta_{\mathcal{A}})&=1-\beta.
\end{align*}
Then, the overall test of $H_0$ satisfies
\begin{align*}
    \mathrm{Pr}(H_0 \text{ is rejected} \mid  \btheta \in \Theta_0) &\leq \alpha \\ 
    \mathrm{Pr}(H_0 \text{ is rejected} \mid \btheta \in \Theta_{\mathcal{A}})&\geq 1-\beta.
\end{align*}
\label{lemma:UI}
\end{lemma}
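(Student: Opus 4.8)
The plan is to exploit the single structural fact that defines union--intersection testing, namely that the global rejection region is exactly the union of the individual rejection regions, after which both bounds follow from elementary properties of probability measures. No distributional assumptions, and no calculations, are needed.

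First I would fix notation by writing $R_i = \{H_{0,i}\text{ is rejected}\}$ for the event that the $i$th individual test rejects, and record the defining feature of UI testing: since $H_0 = \bigcap_{i=1}^M H_{0,i}$, the global null is rejected precisely when at least one subhypothesis is rejected, so that $\{H_0\text{ is rejected}\} = \bigcup_{i=1}^M R_i$. Everything reduces to bounding the probability of this union under the two conditioning events.

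For the type I error bound I would condition on $\btheta\in\Theta_0$. Because $\Theta_0$ corresponds to $\bigcap_i H_{0,i}$, every individual null holds under this event, so the hypothesis of the lemma supplies $\mathrm{Pr}(R_i \mid \btheta\in\Theta_0)=\alpha/M$ for each $i$. Applying Boole's (union) inequality to $\bigcup_i R_i$ then gives $\mathrm{Pr}(H_0\text{ rejected} \mid \btheta\in\Theta_0)\le\sum_{i=1}^M \alpha/M=\alpha$; this is exactly the Bonferroni step, and it is where the per-test level $\alpha/M$ is consumed. For the power bound I would instead use monotonicity: since $R_j\subseteq\bigcup_i R_i$ for every $j$, we have $\mathrm{Pr}(\bigcup_i R_i \mid \btheta\in\Theta_{\mathcal{A}})\ge \mathrm{Pr}(R_j\mid \btheta\in\Theta_{\mathcal{A}})$, and choosing any index $j$ whose test the design powers to $1-\beta$ yields $\mathrm{Pr}(H_0\text{ rejected}\mid \btheta\in\Theta_{\mathcal{A}})\ge 1-\beta$. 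No multiplicity correction of the power enters, precisely because a single rejection already rejects the global null.

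The main obstacle is conceptual rather than computational: being careful about the conditioning. For the type I bound one must argue that membership in $\Theta_0$ forces every individual null to hold, so that each test operates at its nominal size; for composite one-sided nulls $\delta_i\le 0$ the rejection probability is largest on the boundary $\delta_i=0$, and the point $\btheta=\mathbf{0}$ lies in $\Theta_0$ and attains all these boundaries at once, so the supremum of the type I error over $\Theta_0$ is indeed governed by the same union bound. For the power bound one must identify an index whose individual test is powered to $1-\beta$ under the design alternative $\btheta_{\mathcal{A}}$. Once the conditioning is pinned down, the two claims are immediate from Boole's inequality and monotonicity of probability, respectively.
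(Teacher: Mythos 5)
Your proposal is correct and follows essentially the same route as the paper: the type I bound is the identical Boole/Bonferroni union-bound step, and your power bound via monotonicity ($R_j \subseteq \bigcup_i R_i$, so the union is at least as probable as any single rejection) is the De Morgan dual of the paper's step, which complements the union and applies the Fr\'echet inequality $\mathrm{Pr}(\bigcap_i \lnot R_i) \leq \min_i \mathrm{Pr}(\lnot R_i)$. Your added remark about the composite one-sided nulls attaining their size at the common boundary point $\btheta = \mathbf{0}$ is a careful touch the paper leaves implicit, but it does not change the argument.
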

\begin{proof}
We start by proving that $\mathrm{Pr}(H_0 \text{ is rejected} \mid \btheta \in \Theta_0) \leq \alpha $.
Let $R$ be the event that $H_0$ is rejected, and $R_i$ be the event that $H_{0,i}$ is rejected. We have:
\begin{align*}
\mathrm{Pr}(R \mid \btheta \in \Theta_0) &= \mathrm{Pr}\left(\bigcup_{i=1}^M R_i \mid \btheta \in \Theta_0\right)\\ &\leq \sum_{i=1}^M \mathrm{Pr}(R_i \mid \btheta \in \Theta_0).
\end{align*}
Now, if each $H_{0,i}$ is tested at the significance level $\alpha/M$, then:
\begin{align*}
\mathrm{Pr}(R_i \mid \btheta \in \Theta_0) \leq \frac{\alpha}{M} \quad \text{for } i = 1, 2, \ldots, M
\end{align*}
Substituting this into the previous equation:
\begin{align*}
\mathrm{Pr}(R \mid \btheta \in \Theta_0) \leq \sum_{i=1}^M \frac{\alpha}{M} = \alpha,
\end{align*}
which shows that the probability of rejecting the null hypothesis when it is true is less than or equal to $\alpha$.

Next, we prove that $\mathrm{Pr}(R \mid \btheta \in \Theta_{\mathcal{A}})\geq 1-\beta$.
We now have:
\begin{align*}
\mathrm{Pr}(R \mid \btheta \in \Theta_{\mathcal{A}}) &= \mathrm{Pr}\left(\bigcup_{i=1}^M R_i \mid \btheta \in \Theta_{\mathcal{A}}\right) \\
&= 1 - \mathrm{Pr}\left(\bigcap_{i=1}^M \lnot R_i \mid \btheta \in \Theta_{\mathcal{A}}\right) \\
&\geq 1 - \min_i\mathrm{Pr}(\lnot R_i \mid \btheta \in \Theta_{\mathcal{A}})
\end{align*}
where the the last step is by the Fréchet inequality.
Now, if each test is designed at the power level $1-\beta$, then:
\begin{align*}
\min_i \mathrm{Pr}(\lnot R_i \mid \btheta \in \Theta_{\mathcal{A}}) \leq \beta \quad \text{for } i = 1, 2, \ldots, M
\end{align*}
Substituting this into the previous equation:
\begin{align*}
\mathrm{Pr}(R \mid \btheta \in \Theta_{\mathcal{A}}) \geq 1 - \beta,
\end{align*}
which finishes the proof.
\end{proof}

Contrary to the UI case, only the type II error rate must be adjusted for tests that rely on the IU principle. We formalize this in Lemma \ref{lemma:sig_pow_intersec_union}.
\begin{lemma}[Significance and power level adjustments for intersection-union testing.]\label{lemma:sig_pow_intersec_union}
Consider an intersection union testing problem of $M$ hypotheses on the form
\begin{align*}
    H_0: \bigcup_{i=1}^M H_{0, i} \text{ versus } H_{\mathcal{A}}: \bigcap_{i=1}^M H_{\mathcal{A}, i}
\end{align*}
with target nominal type I and type II error rates $\alpha$ and $\beta$. Let $\btheta\in \mathbb{R}^M$ be the $M$-dimensional parameter being tested, and let $\Theta_0$ and $\Theta_{\mathcal{A}}$ be the parameters spaces under the null and alternative hypotheses, respectively, such that the rejection regions of the tests satisfy
\begin{align*}
    \mathrm{Pr}(H_{0,i} \text{ is rejected} \mid \btheta \in \Theta_0)&=\alpha\\    
    \mathrm{Pr}(H_{0,i} \text{ is rejected} \mid \btheta \in \Theta_{\mathcal{A},i})&=1-\frac{\beta}{M}.
\end{align*}
Then, the overall test of $H_0$ satisfies
\begin{align*}
    \mathrm{Pr}(H_0 \text{ is rejected} \mid \btheta \in \Theta_0) &\leq \alpha \\
    \mathrm{Pr}(H_0 \text{ is rejected} \mid \btheta \in \Theta_{\mathcal{A}})&\geq 1-\beta.
\end{align*}
\label{lemma:IU}
\end{lemma}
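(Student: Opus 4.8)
The plan is to mirror the proof of Lemma \ref{lemma:UI}, exploiting the exact duality between UI and IU testing. Because the IU decision rule rejects $H_0$ only when \emph{every} subhypothesis is rejected, I set $R_i$ to be the event that $H_{0,i}$ is rejected and note that the global rejection event is now the intersection $R = \bigcap_{i=1}^M R_i$, rather than the union that appeared in the UI case. The two statements to establish are the type I bound $\mathrm{Pr}(R \mid \btheta \in \Theta_0) \leq \alpha$ and the power bound $\mathrm{Pr}(R \mid \btheta \in \Theta_{\mathcal{A}}) \geq 1-\beta$, and the organizing observation is that the roles of the Bonferroni (union) bound and the Fréchet (intersection) bound are swapped relative to Lemma \ref{lemma:UI}.

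For the type I bound, the crucial structural fact is that the composite null is $\Theta_0 = \bigcup_{i=1}^M \Theta_{0,i}$, so whenever $\btheta \in \Theta_0$ at least one subnull $H_{0,i^*}$ is true. Since $R = \bigcap_i R_i \subseteq R_{i^*}$, the Fréchet inequality $\mathrm{Pr}(\bigcap_i R_i) \leq \min_i \mathrm{Pr}(R_i)$ yields $\mathrm{Pr}(R \mid \btheta \in \Theta_0) \leq \mathrm{Pr}(R_{i^*} \mid \btheta \in \Theta_0) \leq \alpha$, where the final inequality holds because test $i^*$ is carried out at the unadjusted level $\alpha$ and its own null is active. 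This is precisely the origin of the ``no significance adjustment'' claim: only a single true subnull is guaranteed, so controlling each individual test at $\alpha$ already suffices.

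For the power bound I would pass to complements and apply the union bound. Under the composite alternative $\Theta_{\mathcal{A}} = \bigcap_i \Theta_{\mathcal{A},i}$ every subalternative holds simultaneously, so each test attains its designed power $1-\beta/M$, i.e.\ $\mathrm{Pr}(\lnot R_i \mid \btheta \in \Theta_{\mathcal{A}}) \leq \beta/M$. Then
\[
\mathrm{Pr}(R \mid \btheta \in \Theta_{\mathcal{A}}) = 1 - \mathrm{Pr}\Bigl(\bigcup_{i=1}^M \lnot R_i \,\Big|\, \btheta \in \Theta_{\mathcal{A}}\Bigr) \geq 1 - \sum_{i=1}^M \mathrm{Pr}(\lnot R_i \mid \btheta \in \Theta_{\mathcal{A}}) \geq 1 - M\cdot\frac{\beta}{M} = 1-\beta .
\]
This is the step that forces the type II correction: the union of the $M$ complementary events can accumulate error, so each test must be over-powered to $1-\beta/M$ in order for the total type II error to stay below $\beta$.

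I expect the only genuinely delicate step to be the type I argument, specifically the justification that under the composite null it is enough for one true subnull to be controlled at $\alpha$. The subtlety is that the lemma's stated condition $\mathrm{Pr}(H_{0,i}\text{ is rejected}\mid\btheta\in\Theta_0)=\alpha$ should be read as the worst-case behavior of test $i$ over the region where its own null holds; the bound then follows by selecting the index $i^*$ whose subnull is active and discarding the remaining events through the intersection bound. The power direction is routine once the union bound is in place, being the direct dual of the Fréchet step used for power in Lemma \ref{lemma:UI}.
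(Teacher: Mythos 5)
Your proof is correct and follows essentially the same route as the paper's: the type I bound comes from the intersection (Fr\'echet) inequality $\mathrm{Pr}\left(\bigcap_{i=1}^M R_i\right) \leq \min_i \mathrm{Pr}(R_i)$ together with level-$\alpha$ control of an individual test, and the power bound comes from complementation plus the union bound applied to the Bonferroni-corrected type II levels $\beta/M$. Your only departure is a more careful reading of the null condition—selecting the single active subnull $i^*$ rather than taking at face value the paper's stated assumption that every test rejects at rate $\alpha$ over all of $\Theta_0$—which is a refinement of, not a deviation from, the paper's argument.
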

\begin{proof}
We  again let $R$ be the event that $H_0$ is rejected, and $R_i$ be the event that $H_{0,i}$ is rejected and start by proving that $\mathrm{Pr}(R \mid H_0) \leq \alpha$. We then have that
\begin{align*}
  \mathrm{Pr}(R \mid \btheta \in \Theta_0) &=  \mathrm{Pr}\left(\bigcap_{i=1}^M R_i \mid \btheta \in \Theta_0\right)\\
  &\leq \min_{i} \mathrm{Pr}(R_i \mid \btheta \in \Theta_0) = \alpha 
\end{align*}
which proves the first part. 
For proving that $\mathrm{Pr}(R \mid \btheta \in \Theta_{\mathcal{A}}) \geq 1 - \beta$, we have that
\begin{align*}
    \mathrm{Pr}(R \mid \btheta \in \Theta_{\mathcal{A}}) &= \mathrm{Pr}\left(\bigcap_{i=1}^M R_i \mid \btheta \in \Theta_{\mathcal{A}} \right)\\
     &= 1- \mathrm{Pr}\left(\bigcup_{i=1}^M \lnot R_i \mid \btheta \in \Theta_{\mathcal{A}} \right)\\
     &\geq 1- \sum_{i=1}^M \mathrm{Pr}(\lnot R_i \mid \btheta \in \Theta_{\mathcal{A}})\\
     &\geq 1- \sum_{i=1}^M  \frac{\beta}{M} = 1- \beta\\
\end{align*}
which finishes the proof.  
\end{proof}
The results of the lemmas can be intuitively understood by considering in what situations multiple chances arise. In the UI testing setting of Lemma \ref{lemma:UI}, it suffices that at least one individual null hypothesis is rejected for the overall hypothesis to be rejected. Since each individual hypothesis $H_i$ constitutes one chance of rejecting the overall hypothesis, we must correct for these multiple possibilities. On the other hand, in the IU testing case of Lemma \ref{lemma:IU}, all individual null hypotheses must be rejected for the overall hypothesis to be rejected. Equivalently stated, there are multiple chances of \emph{not} rejecting---and this is what we must correct for. Therefore, to ensure that the power of the IU test is bounded from below by the desired nominal level, the type II error rate $\beta$ used in each individual test is multiplicity adjusted using a standard Bonferroni correction. 

\subsection{Bounding the error rates for a decision rule including both success and guardrail metrics}
The decision rule we use combines superiority and non-inferiority testing, and appropriate adjustments to individual-level tests follow immediately from the results of the previous section. Before we discuss these adjustments, we first describe the decision rule. 

\begin{decisionrule}\label{dr1}
Ship the change if and only if:
\begin{enumerate}
    \item at least one success metric is significantly superior
    \item all guardrail metrics are significantly non-inferior.
\end{enumerate}
That is, ship if and only if the global superiority/non-inferiority null hypothesis is rejected in favor of the alternative hypothesis.
\end{decisionrule}

We can now establish how to adjust Decision Rule \ref{dr1} to ensure that family-wise type I or type II error rates are not inflated. The decision rule consists of two levels of testing and can be mathematically stated as:
\begin{align*}
    H_0^{(\mathcal{S})}\bigcup H_0^{(\mathcal{G})} \text{ versus } H_{\mathcal{A}}^{(\mathcal{S})}\bigcap H_{\mathcal{A}}^{(\mathcal{G})}.
\end{align*}
The testing problem is thus an intersection-union test consisting of $G+1$ hypotheses---$G$ hypotheses for the guardrail metrics, and one additional composite hypothesis concerning the at-least-one test for superiority. By applying Lemma \ref{lemma:UI} and \ref{lemma:IU}, we can establish the necessary adjustments for the decision rule, formalized in Proposition \ref{prop:alpha_beta_under_correction}.
\begin{proposition}\label{prop:alpha_beta_under_correction}
    Let $\alpha$ and $\beta$ be the nominal type I and type II rates that the decision rule should satisfy, and let there be $G$ guardrail metrics, and $S$ success metrics, where $S>0$. If 
    \begin{enumerate}
        \item each guardrail metric is tested using the significance level $\alpha$ and designed for the power level $1-\dfrac{\beta}{G+1}$
        \item each success metric is tested using the significance level $\frac{\alpha}{S}$ and designed for the power level $1-\dfrac{\beta}{G+1}$,
    \end{enumerate}
    then the error rates for Decision Rule \ref{dr1} are at most $\alpha$ and $\beta$.
\end{proposition}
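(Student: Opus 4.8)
The plan is to recognize Decision Rule \ref{dr1} as a two-level test and to obtain its error bounds by composing the two lemmas. At the outer level the rule is an intersection-union test of $G+1$ component hypotheses: the $G$ individual guardrail non-inferiority hypotheses $H_{0,j}^{(\mathcal{G})}$, together with a single aggregate hypothesis $H_0^{(\mathcal{S})}$. That aggregate hypothesis is, at the inner level, itself a union-intersection test of the $S$ success metrics. So I would first analyze the inner superiority test with Lemma \ref{lemma:UI}, then feed the resulting guarantees into Lemma \ref{lemma:IU} applied at the outer level.

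First I would apply Lemma \ref{lemma:UI} to the $S$ success metrics, treating the inner UI problem as having nominal type I rate $\alpha$ and nominal type II rate $\beta/(G+1)$. Under the proposition's prescription each success test runs at significance $\alpha/S$ and is designed for power $1-\beta/(G+1)$, which are exactly the per-test levels required by the lemma for these nominal rates. Lemma \ref{lemma:UI} then yields that the aggregate superiority test rejects with probability at most $\alpha$ under its null and at least $1-\beta/(G+1)$ under its alternative. The key bookkeeping point here is the asymmetry of UI testing: the significance is divided by $S$ but the power level is not, so the inner test passes the full $1-\beta/(G+1)$ power budget through unchanged.

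Next I would treat the outer problem as an IU test of the $M=G+1$ components and apply Lemma \ref{lemma:IU} with nominal rates $\alpha$ and $\beta$. Each guardrail test is run at significance $\alpha$ and power $1-\beta/(G+1)$ by assumption, and the aggregate superiority test inherits type I rate at most $\alpha$ and power at least $1-\beta/(G+1)$ from the previous step. These are precisely the per-component levels Lemma \ref{lemma:IU} demands when the outer type II budget $\beta$ is split across $G+1$ components. Its conclusion then gives overall type I rate at most $\alpha$ and overall power at least $1-\beta$ for the shipping decision, which is the claim.

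The main obstacle I anticipate is not the arithmetic but verifying that the composite superiority hypothesis is an admissible component of the outer IU test. Lemma \ref{lemma:IU} is stated with exact equalities for each component's type I rate and power, whereas the inner test only delivers the inequalities $\leq \alpha$ and $\geq 1-\beta/(G+1)$. I would therefore check that the proof of Lemma \ref{lemma:IU} uses these quantities only in the favorable direction: the type I bound relies on $\mathrm{Pr}(\bigcap_i R_i\mid\btheta\in\Theta_0)\leq\min_i\mathrm{Pr}(R_i\mid\btheta\in\Theta_0)$, so an upper bound of $\alpha$ on each component suffices, and the power bound relies on the union bound $\mathrm{Pr}(\bigcup_i\lnot R_i\mid\btheta\in\Theta_{\mathcal{A}})\leq\sum_i\mathrm{Pr}(\lnot R_i\mid\btheta\in\Theta_{\mathcal{A}})$, so a lower bound of $1-\beta/(G+1)$ on each component's power suffices. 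I would also be careful about the parameter spaces: under the outer global null at least one component null holds, so the $\min$ argument can be anchored on whichever component is genuinely null (including the aggregate superiority null, handled by the inner step); under the outer global alternative all component alternatives hold simultaneously, so every per-component power guarantee is in force at the relevant design configuration.
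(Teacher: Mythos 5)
Your proposal is correct and takes essentially the same route as the paper: the paper's proof likewise composes Lemma \ref{lemma:sig_pow_union_intersec} (success metrics at level $\alpha/S$, power $1-\beta/(G+1)$) with Lemma \ref{lemma:sig_pow_intersec_union} for the outer intersection-union structure, bounding the type I rate by a min over the composite rejection events and the type II rate by the union bound splitting $\beta$ as $\beta/(G+1) + G\beta/(G+1)$. The only difference is bookkeeping — you invoke Lemma \ref{lemma:IU} once on $G+1$ components (the $G$ guardrails plus the aggregate superiority test), exactly as the paper's own text frames the problem, while the paper's proof writes out the same inequalities over the two composite events $R_{(\mathcal{S})}$ and $R_{(\mathcal{G})}$; your observation that the lemmas' exact-equality hypotheses are only ever used in the favorable direction is the correct reading that makes either version rigorous.
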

\begin{proof}
Let now $R_{(\mathcal{S})}$ and $R_{(\mathcal{G})}$ denote the events that, respectively, the composite success and guardrail hypotheses are rejected. Furthermore, let $\Theta_0^{(\mathcal{L})}$ be the parameter space under the null hypothesis such that each individual tests are of level at most $\alpha$ for the class of tests $\mathcal{L}$, and let $$\Theta_{\beta, \mathcal{A}}^{(\mathcal{L})} \subseteq \Theta_{\mathcal{A}}^{(\mathcal{L})}$$ be a part of the parameter space for which all individual tests achieve a power of at least $1-\beta$.
We start by proving that $$\mathrm{Pr}\left(R_{(\mathcal{S})}\bigcap R_{(\mathcal{G})} \mid \btheta^{(\mathcal{S})} \in \Theta_0^{(\mathcal{S})}, \btheta^{(\mathcal{G})} \in \Theta_0^{(\mathcal{G})}\right)\leq \alpha $$. We have that
\begin{align*}
&\mathrm{Pr}\left(R_{(\mathcal{S})}\bigcap R_{(\mathcal{G})} \mid \btheta^{(\mathcal{S})} \in \Theta_0^{(\mathcal{S})}, \btheta^{(\mathcal{G})} \in \Theta_0^{(\mathcal{G})}\right) \\&\leq  \min \Big( 
\mathrm{Pr}\left(R_{(\mathcal{S})} \mid \btheta^{(\mathcal{S})} \in \Theta_0^{(\mathcal{S})}\right),\\
&\mathrm{Pr}\left(R_{(\mathcal{G})} \mid \btheta^{(\mathcal{G})} \in \Theta_0^{(\mathcal{G})}\right)
\Big).
\end{align*}
By Lemmas \ref{lemma:sig_pow_intersec_union} and \ref{lemma:sig_pow_union_intersec}, and the significance level used by each test given the conditions of the proposition, we know that the upper bound for both these probabilities is $\alpha$, which means that 
\begin{align*}
  &\min \Big( 
\mathrm{Pr}\left(R_{(\mathcal{S})} \mid \btheta^{(\mathcal{S})} \in \Theta_0^{(\mathcal{S})}\right),\\ &
\mathrm{Pr}\left(R_{(\mathcal{G})} \mid \btheta^{(\mathcal{G})} \in \Theta_0^{(\mathcal{G})}\right)
\Big)\\ &\leq \min (\alpha, \alpha) = \alpha.
\end{align*}
Second, we prove that 
$$\mathrm{Pr}\left(R_{(\mathcal{S})}\bigcap R_{(\mathcal{G})} \mid \btheta^{(\mathcal{S})} \in \Theta_{\beta, \mathcal{A}}^{(\mathcal{S})}, \btheta^{(\mathcal{G})} \in \Theta_{\beta, \mathcal{A}}^{(\mathcal{G})}\right)\geq 1-\beta.$$ We have that
\begin{align*}
    &\mathrm{Pr}\left(R_{(\mathcal{S})}\bigcap R_{(\mathcal{G})} \mid \btheta^{(\mathcal{S})} \in \Theta_{\beta, \mathcal{A}}^{(\mathcal{S})}, \btheta^{(\mathcal{G})} \in \Theta_{\beta, \mathcal{A}}^{(\mathcal{G})} \right) \\&= 1 - \mathrm{Pr}\left( \lnot R_{(\mathcal{S})} \bigcup \lnot R_{(\mathcal{G})} \mid \btheta^{(\mathcal{S})} \in \Theta_{\beta, \mathcal{A}}^{(\mathcal{S})}, \btheta^{(\mathcal{G})} \in \Theta_{\beta, \mathcal{A}}^{(\mathcal{G})} \right)\\
    &\geq 1- \Big( \mathrm{Pr}(\lnot R_{(\mathcal{S})} \mid \btheta^{(\mathcal{S})} \in \Theta_{\beta, \mathcal{A}}^{(\mathcal{S})}) +\\& \mathrm{Pr}( \lnot R_{(\mathcal{G})} \mid \btheta^{(\mathcal{G})} \in \Theta_{\beta, \mathcal{A}}^{(\mathcal{G})} ) \Big), 
\end{align*}
where the inequality follows by the union bound. By Lemmas \ref{lemma:sig_pow_union_intersec} and \ref{lemma:sig_pow_intersec_union} we know that since $\beta/(G+1)$ is used for all success and guardrail metrics, it follows that
$\mathrm{Pr}(\lnot R_{(\mathcal{S})} \mid \btheta^{(\mathcal{S})} \in \Theta_{\beta, \mathcal{A}}^{(\mathcal{S})})\leq \frac{\beta}{G+1}$ and $\mathrm{Pr}(R_{(\lnot \mathcal{G})} \mid \btheta^{(\mathcal{G})} \in \Theta_{\beta, \mathcal{A}}^{(\mathcal{G})} ) \leq \frac{G\beta}{G+1}$. Plugging this into the previous inequality gives
\begin{align*}
    &\mathrm{Pr}\left(R_{(\mathcal{S})}\bigcap R_{(\mathcal{G})} \mid \btheta^{(\mathcal{S})} \in \Theta_{\beta, \mathcal{A}}^{(\mathcal{S})}, \btheta^{(\mathcal{G})} \in \Theta_{\beta, \mathcal{A}}^{(\mathcal{G})}\right) \\&\geq 1 - \left(\frac{\beta}{G+1}  + \frac{G\beta}{G+1} \right) = 1-\beta  
\end{align*}
which concludes the proof.
\end{proof}
Proposition \ref{prop:alpha_beta_under_correction} assumes the existence of at least one success metric. If there is none, Lemma \ref{lemma:sig_pow_intersec_union} can be used directly to see that the same result holds, but with individual power levels $1-\beta/G$.

\subsection{Power corrections for non-inferiority testing}

Power corrections (or, equivalently, "beta corrections") have, to the best of our knowledge, not been discussed in the online experimentation literature. However, the importance of adjusting power in the presence of IU testing is known in the statistical literature (\citealp{Kong2004};~\citealp{Offen2007};~\citealp[ch. 4]{dmitrienko2009}). These corrections are crucial to ensure appropriate levels of power when using non-inferiority tests, since the goal is to find evidence for non-inferiority in all tests simultaneously. The intuition for power corrections for multiple guardrails tested with non-inferiority tests is analogous to standard multiple-testing corrections of type I error rates. If, e.g., all guardrail metrics are independent and each guardrail metric has a NIM for which the power is 80\%, then the probability that at least one of them is not significantly non-inferior under the alternative quickly goes towards one as the number of guardrail metrics increase. Under independence, the number of significant guardrail metrics out of a total of $G$ guardrail metrics is distributed as $\mathrm{Bin}(G,0.8)$. The power for the decision rule, assuming only independent guardrail metrics, is simply $0.8^G$, which decays quickly in $G$. 
Figure \ref{fig:power_corr_illustration} displays how the power of the decision rule drops. Already at 10 guardrail metrics, the simultaneous power is less than 11\% without adjustment.
\begin{figure}\centering
\includegraphics[width=\linewidth]{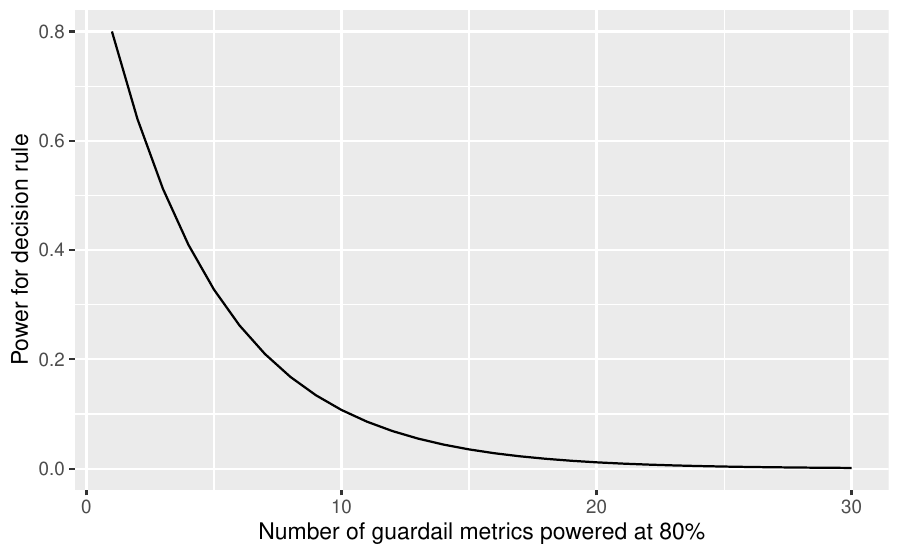}
\caption{Power for simultaneous rejection of all non-inferiority null hypotheses when the number of guardrail metrics increase and the power level is not adjusted appropriately. All metrics are independent, and individually powered to 80 percent.}\label{fig:power_corr_illustration}
\end{figure}
For theoretical bounds on the type I and type II error rates for Decision rule \ref{dr1} under perfect linear correlation and independence, respectively, see appendix Section \ref{sec:app:examples_fpr_power_dr1}.

\section{Extending the decision rule with deterioration and quality metrics}
Deterioration tests are inferiority tests that aim to capture regressions in metrics. They can be applied to metrics already present as a success or a guardrail metric, or to metrics that are only included in this category as deterioration metrics. The test checks if the metric is significantly deteriorating. That is, if the treatment group is significantly inferior to the control group with respect to the metric of interest.
Deterioration tests for success and guardrail metrics attempt to identify significant regressions, which would, if they exist, speak against the success of the experiment. Neither the superiority test used for success metrics or the non-inferiority test used for guardrail metrics would on its own indicate a regression. Knowing when regressions occur is essential when running experiments.

In practice, in addition to success, guardrail and detrioriation metrics, also experiment-quality metrics are used. For example, at Spotify, we include a set of tests and metrics that evaluate the quality of the experiment. These include a test for balanced traffic through a sample ratio mismatch test, and a test for pre-exposure bias. By including deterioration and quality tests in the decision rule, the complexity to manage the risks of an incorrect decision increases.  
Decision Rule \ref{dr2} formalizes the complete decision rule that is used at Spotify. 
\begin{decisionrule}\label{dr2}
Ship the change if and only if:
\begin{enumerate}
    \item at least one success metric is significantly superior
    \item all guardrail metrics are significantly non-inferior
    \item none of the success, guardrail or deterioration metrics are significantly inferior
    \item none of the quality tests are significantly rejecting quality
\end{enumerate}
That is, ship if and only if the global superiority/non-inferiority null hypothesis is rejected in favor of the alternative hypothesis, the inferiority null hypothesis is not rejected for any metric, and no quality test is significant. 
\end{decisionrule}
Proposition \ref{prop:alpha_beta_under_correction_with_deterioration_and_quality} displays the corresponding correction, to bound the error rates of Decision Rule \ref{dr2}. 
\begin{proposition}\label{prop:alpha_beta_under_correction_with_deterioration_and_quality}
    Let $\alpha$ and $\beta$ be the nominal type I and type II rates that the decision rule should satisfy, and let there be $G$ guardrail metrics, $S$ success metrics, $D$ deterioration metrics in addition to the success and guardrail metrics, and $Q$ be quality metrics/tests. Let $\alpha_-$ be the total nominal type I rate for the deterioration and quality tests where $\alpha_-<\beta$. If   
    \begin{enumerate}
        \item each deterioration and quality test is performed with significance level $\alpha_-^*= \frac{\alpha_-}{S+G+D+Q}$
        \item each success metric is tested using a superiority test using the significance level $\alpha^*=\frac{\alpha}{S}$
        \item each guardrail metric is tested using a non inferiority test and the significance level $\alpha$
        \item all superiority and non-inferiority tests are designed for the power level
         \begin{align*}
            \beta^*=\begin{cases} 
            \frac{\beta - \alpha}{(1 -\alpha)(G+1)} &   S>0\\ 
            \frac{\beta - \alpha}{(1 -\alpha)G}&   S=0 \end{cases}
        \end{align*} 
    \end{enumerate}
   then the error rates for Decision Rule \ref{dr2} are at most $\alpha$ and $\beta$.
\end{proposition}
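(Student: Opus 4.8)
The plan is to reduce Decision Rule~\ref{dr2} to Proposition~\ref{prop:alpha_beta_under_correction} by writing its shipping event as $A\cap B$, where $A=R_{(\mathcal{S})}\cap R_{(\mathcal{G})}$ encodes conditions~1--2 (exactly the success/guardrail event of Decision Rule~\ref{dr1}) and $B$ is the event that \emph{no} deterioration or quality test rejects, encoding conditions~3--4. The deterioration tests cover the $S$ success, $G$ guardrail, and $D$ additional deterioration metrics, and the quality tests add $Q$ more, so $B$ is governed by $S+G+D+Q$ individual tests run at level $\alpha_-^{*}=\alpha_-/(S+G+D+Q)$; a union bound over them gives $\mathrm{Pr}(\lnot B)\leq\alpha_-$ whenever all of their nulls hold.

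For the type I error I would exploit monotonicity: since $A\cap B\subseteq A$, we have $\mathrm{Pr}(\mathrm{ship}\mid\btheta\in\Theta_0)\leq\mathrm{Pr}(A\mid\btheta\in\Theta_0)$ for every null configuration. The success and guardrail tests use exactly the levels $\alpha/S$ and $\alpha$ required by Proposition~\ref{prop:alpha_beta_under_correction}, which therefore yields $\mathrm{Pr}(A\mid\btheta\in\Theta_0)\leq\alpha$, and hence the false-ship rate of Decision Rule~\ref{dr2} is at most $\alpha$. Intuitively, conditions~3--4 only shrink the shipping region, so they cannot inflate the type I error and need no further adjustment of the success/guardrail levels.

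The type II error is the substantive part. Restricting to the favourable part of the alternative, I would bound the no-ship probability through its two failure modes, $\mathrm{Pr}(\lnot\mathrm{ship})=\mathrm{Pr}(\lnot A\cup\lnot B)$. Under the alternative every deterioration metric lies in its own null (status quo or improvement) and every quality check is valid, so $\mathrm{Pr}(\lnot B)\leq\alpha_-$ as above, while Proposition~\ref{prop:alpha_beta_under_correction} applied with per-test power $1-\beta^{*}$ gives $\mathrm{Pr}(\lnot A)\leq(G+1)\beta^{*}$ when $S>0$ (and $G\beta^{*}$ when $S=0$, using Lemma~\ref{lemma:IU} directly). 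It then remains to combine the two contributions into a bound of $\beta$: solving the target relation $1-\beta=(1-\alpha_-)\bigl(1-(G+1)\beta^{*}\bigr)$ for $\beta^{*}$ reproduces the stated expression, with the total deterioration/quality level $\alpha_-$ playing the role of $\alpha$; the hypothesis $\alpha_-<\beta$ is precisely what keeps the numerator positive, so that $\beta^{*}$ is an admissible power budget.

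The main obstacle is this combination step, i.e.\ lower-bounding $\mathrm{Pr}(A\cap B)$ for the joint law of the two groups of tests. A plain union bound is robust to any data-generating process but only delivers the additive combination $\mathrm{Pr}(\mathrm{ship})\geq 1-(G+1)\beta^{*}-\alpha_-$; the sharper, stated $\beta^{*}$ instead corresponds to the multiplicative combination obtained from the factorisation $\mathrm{Pr}(A\cap B)=\mathrm{Pr}(A)\,\mathrm{Pr}(B\mid A)$ together with $\mathrm{Pr}(\lnot B\mid A)\leq\alpha_-$. I would therefore concentrate the argument on this conditional bound: conditioning on $A$---evidence that the effects are non-negative---should not raise the chance that a deterioration or quality test falsely fires, and indeed the deterioration tests attached to the success and guardrail metrics are essentially implied by $A$, since their rejection thresholds sit far below the superiority and non-inferiority thresholds, leaving only the $D+Q$ genuinely new tests. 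Turning that observation into a rigorous conditional statement is where the work concentrates; once it is in hand, the claimed $\beta^{*}$ follows from the arithmetic above, and everything else reduces mechanically to Lemmas~\ref{lemma:UI} and~\ref{lemma:IU} via Proposition~\ref{prop:alpha_beta_under_correction}.
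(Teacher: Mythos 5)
Your proposal follows essentially the same route as the paper's proof: the type I bound via monotonicity ($A\cap B\subseteq A$) combined with Proposition \ref{prop:alpha_beta_under_correction}, and the type II bound via the factorisation $\mathrm{Pr}(A\cap B)=\mathrm{Pr}(A)\,\mathrm{Pr}(B\mid A)$, a union bound over the $S+G+D+Q$ deterioration/quality tests giving $\mathrm{Pr}(\lnot B\mid A)\leq\alpha_-$, and solving the multiplicative relation for $\beta^*$ (correctly reading the $\alpha$ in the stated formula as $\alpha_-$). The conditional step you flag as the remaining work is treated in the paper at the same level of rigour—by simply asserting the per-test worst-case bound $\alpha_-/(S+G+D+Q)$ conditionally on the success/guardrail rejections—so your outline matches the published argument.
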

\begin{proof} 
Let now $R_{(\mathcal{S})}$, $R_{(\mathcal{G})}$, $R_{(\mathcal{D})}$, and $R_{(\mathcal{Q})}$ denote the events that the composite success, guardrail, deterioration, and quality null hypotheses are rejected, respectively.
Furthermore, let $R_{(\mathcal{S})}^{(s)}$ and $R_{(\mathcal{D, S})}^{(s)}$ be the event that the superiority and deterioration null hypothesis for success metric $S$ are rejected, respectively. Similarly, let $R_{(\mathcal{G})}^{(g)}$ and $R_{(\mathcal{D, G})}^{(g)}$ be the event that the non-inferiority and deterioration null hypothesis for guardrail metric $g$ are rejected, respectively. We start by by proving that the type I error rate is bounded by $\alpha$ under the null hypothesis, where we start by using $\alpha^*=\frac{\alpha}{S}$. 
Denote the global null $\bxi = \btheta^{(\mathcal{S})} \in \Theta_0^{(\mathcal{S})}, \btheta^{(\mathcal{G})}\in \Theta^{(\mathcal{G})}_0$. We have under the global null
\begin{align*}
&\mathrm{Pr}\left(R_{(\mathcal{S})}\bigcap R_{(\mathcal{G})}\bigcap \lnot R_{(\mathcal{D})} \bigcap \lnot R_{(\mathcal{Q})} \Big|\bxi \right) =\\
&\mathrm{Pr}\left(\lnot R_{(\mathcal{D})} \bigcap \lnot R_{(\mathcal{Q})}   \mid R_{(\mathcal{S})}\bigcap R_{(\mathcal{G})}; \bxi \right) \times \\ &\mathrm{Pr}\left(R_{(\mathcal{S})}\bigcap R_{(\mathcal{G})}\Big| \bxi \right)
\end{align*}
By Lemmas \ref{lemma:sig_pow_intersec_union} and \ref{lemma:sig_pow_union_intersec} we have that $\mathrm{Pr}\left(R_{(\mathcal{S})} \mid \bxi\right)\leq \alpha$ and $\mathrm{Pr}\left(R_{(\mathcal{G})} \mid \bxi \right)\leq \alpha$, which means
\begin{align*}
&\mathrm{Pr}\left(R_{(\mathcal{S})}\bigcap R_{(\mathcal{G})} \mid \bxi \right)\leq \\
& \min \left( 
\mathrm{Pr}\left(R_{(\mathcal{S})} \mid \bxi\right),
\mathrm{Pr}\left(R_{(\mathcal{G})} \mid \bxi\right)
\right) \leq\alpha.
\end{align*}
This directly implies that 
\begin{align*}
   &\mathrm{Pr}\left(\lnot R_{(\mathcal{D})} \bigcap \lnot R_{(\mathcal{Q})}   \mid R_{(\mathcal{S})}\bigcap R_{(\mathcal{G})}; \bxi \right) \times \\& \mathrm{Pr}\left(R_{(\mathcal{S})}\bigcap R_{(\mathcal{G})}\Big| \bxi \right)\\
   &\leq \mathrm{Pr}\left(R_{(\mathcal{S})}\bigcap R_{(\mathcal{G})} \mid \bxi \right) \\
   &\leq \alpha.
\end{align*}
In other words, the effect of the deterioration tests can only make the false positive rate lower than $\alpha$.
This concludes the first part of the proof. 
\\~\\~\\
Second, for ease of reading we denote $\bkappa = \btheta^{(\mathcal{S})}\in \Theta_{\beta, \mathcal{A}}^{(\mathcal{S})}, \btheta^{(\mathcal{G})}\in \Theta_{\beta, \mathcal{A}}^{(\mathcal{G})}$ and $\bpsi = R_{(\mathcal{S})}\bigcap R_{(\mathcal{G})}; \bxi$. It remains to prove that 
\begin{align*}
    &\mathrm{Pr}\left(R_{(\mathcal{S})}\bigcap R_{(\mathcal{G})} \bigcap \lnot R_{(\mathcal{D})} \bigcap \lnot R_{(\mathcal{Q})}  \mid \bkappa \right)\geq 1-\beta.
\end{align*}
We have that
\begin{align*}
    &\mathrm{Pr}\left(R_{(\mathcal{S})}\bigcap R_{(\mathcal{G})} \bigcap \lnot R_{(\mathcal{D})} \bigcap \lnot R_{(\mathcal{Q})}  \mid \bkappa\right)=\\&
    \mathrm{Pr}\left( \lnot R_{(\mathcal{D})}\bigcap \lnot R_{(\mathcal{Q})}  \mid\bpsi\right) \times \mathrm{Pr}\left(R_{(\mathcal{S})}\bigcap R_{(\mathcal{G})} \mid \bkappa\right)
\end{align*}
By Proposition \ref{prop:alpha_beta_under_correction}, we know that 
$$\mathrm{Pr}\left(R_{(\mathcal{S})}\bigcap R_{(\mathcal{G})} \mid \bkappa\right) \geq 1- \beta$$ 
when each superiority and non-inferiority is design to achieve $\beta^*=\frac{\beta}{G+1}$.
Looking closer at the first factor we note that 
\begin{align*}
   &\mathrm{Pr}\left( \lnot R_{(\mathcal{D})}\bigcap \lnot R_{(\mathcal{Q})}  \mid \bpsi \right)  = \\
   &\mathrm{Pr}\Big( \bigcap_{s=1}^S \lnot R_{(\mathcal{D, S})}^{(s)}\cap \bigcap_{g=1}^G \lnot R_{(\mathcal{D, G})}^{(g)}\cap \\& \bigcap_{d=1}^D \lnot R_{(\mathcal{D})}^{(d)}\cap \bigcap_{q=1}^Q \lnot R_{(\mathcal{Q})}^{(q)} \mid \bpsi\Big)= \\
   & 1- \mathrm{Pr}\Big( \bigcup_{s=1}^S R_{(\mathcal{D, S})}^{(s)}\cup \bigcup_{g=1}^G  R_{(\mathcal{D, G})}^{(g)}\cup \\& \bigcup_{d=1}^D R_{(\mathcal{D})}^{(d)}\cup \bigcup_{q=1}^Q  R_{(\mathcal{Q})}^{(q)} \mid \bpsi\Big)\\
   & \geq 1-\sum_{s=1}^S\mathrm{Pr}\left( R_{(\mathcal{D, S})}^{(s)} \Big| \bpsi\right) - \sum_{g=1}^G \mathrm{Pr}\left( R_{(\mathcal{D, G})}^{(g)} \Big| \bpsi\right)
   - \\& \sum_{t=1}^D\mathrm{Pr}\left( R_{(\mathcal{D})}^{(t)} \Big| \bpsi\right)-
   \sum_{q=1}^Q\mathrm{Pr}\left( R_{(\mathcal{Q})}^{(q)} \Big| \bpsi\right)
\end{align*}
Conditioned on $\bpsi$ it is unlikely that all guardrail metrics and success metrics that are significantly non-inferior and superior, respectively, are simultaneously significantly inferior. However, we can bound all of these probabilities to $\leq \frac{1}{S+G+D+Q}\alpha_{-}$ as a worst case scenario. This implies that  
\begin{align*}
\sum_{s=1}^S\mathrm{Pr}\left( R_{(\mathcal{D, S})}^{(s)} \Big| \bpsi\right) &\leq \frac{S}{S+G+D+Q}\alpha_{-}\\
    \sum_{g=1}^G \mathrm{Pr}\left( R_{(\mathcal{D, G})}^{(g)} \Big| \bpsi\right)&\leq\frac{G}{S+G+D+Q}\alpha_{-}\\
   \sum_{t=1}^D\mathrm{Pr}\left( R_{(\mathcal{D})}^{(t)} \Big| \bpsi\right)&\leq\frac{D}{S+G+D+Q}\alpha_{-}\\
   \sum_{q=1}^Q\mathrm{Pr}\left( R_{(\mathcal{Q})}^{(q)} \Big| \bpsi\right)&\leq\frac{Q}{S+G+D+Q}\alpha_{-}. 
\end{align*}
Plugging this into the above yield
\begin{align*}
    &\mathrm{Pr}\left(R_{(\mathcal{S})}\bigcap R_{(\mathcal{G})} \bigcap \lnot R_{(\mathcal{D})}\bigcap \lnot R_{(\mathcal{Q})} \mid \bkappa\right)=\\
    &\mathrm{Pr}\left( \lnot R_{(\mathcal{D})}\bigcap \lnot R_{(\mathcal{Q})} \mid \bpsi \right)\mathrm{Pr}\left(R_{(\mathcal{S})}\bigcap R_{(\mathcal{G})} \mid \bkappa\right)\\
    &\geq (1-\alpha_{-})(1-\beta).
\end{align*}
To find the corrected $\beta$ for achieving the intended type II error rate,  we solve 
\begin{align*}
 (1 - \alpha_{-})(1 - \beta^*) &= 1-\beta 
 \Leftrightarrow
 \beta^* = \frac{\beta - \alpha_{-}}{1 - \alpha_{-}}.
\end{align*} 
Together this means that, if we design each superiority and non-inferiority test to obtain power $1-\beta^*$ where $\beta^*=\frac{\beta - \alpha_{-}}{(1 -\alpha_{-}) (G+1)}$, the type II error rate is bounded by $\beta$. This concludes the proof. 
\end{proof}
The condition that $\alpha_-<\beta$ should be interpreted as that is only possible to correct the false negative rate (and thus ensuring the intended power) for decision rules that include deterioration and/or quality tests as long as the intended false negative rate for the decision is larger than the intended false positive rate for those tests. This is quite natural, if we add a large enough chance of rejecting no deterioration or quality, even when there is no deteriration or problem with the quality, this will at some point (for some $\alpha_-$) limit our ability to find a positive decision, regardless of the sample size.  

For success and guardrail metrics, there is a dependency between the rejection of the deterioration test and the superiority or non-inferiority test for any given metric, respectively. That is, if the superiority or non-inferiority test is rejected, this affects the probability of the deterioration test to be rejected too. It is possible to utilize this dependency to improve the efficiency of Proposition \ref{prop:alpha_beta_under_correction_with_deterioration_and_quality} slightly by making additional assumptions about the relation between $\alpha$, $\alpha_-$, and $\beta$. See Appendix \ref{app:improving_the_prop} for details.

\section{Monte Carlo simulation study}\label{sec:MC}
In this section we run a simulation study to illustrate the empirical error rates for the multi-metric decision rules with and without the alpha and power corrections \footnote{Code for replication can be found at \url{https://github.com/MSchultzberg/Risk-management-paper-2024}.}. 
To make the simulation more relevant, all deterioration and quality tests use Group Sequential Tests (GST). All non-inferiority and superiority tests use fixed-horizon $z$-tests. See Appendix \ref{app:seq_test} for a discussion about combining sequential tests and fixed horizon tests for the same metric. For the GSTs, we analyze the results 10 times during the data collection at evenly spaced intervals. We generate data from a multivariate normal distribution, and treat the variance as known in the tests to be able to keep the sample size small. We use $S=G=5$ and $D=Q=2$, and repeat the simulation 100,000 times for each setting. In all scenarios, $\alpha^+=\alpha^-=0.05$ and $\beta=0.2$.
We compare three designs: no correction, only alpha correction, and Proposition \ref{prop:alpha_beta_under_correction_with_deterioration_and_quality} with Remark \ref{remark:power_dr2}. 
In the simulation study, we vary the following:
\begin{itemize}
    \item Hypothesis under which the simulation is performed
    \begin{itemize}
        \item $H_0$: the null of the non-inferiority and superiority tests
        \item Status quo: the null of the superiority and the alternative for the non-inferiority tests
        \item $H_1$: the alternative for non-inferiority and superiority tests
    \end{itemize}
    \item Dependence structure
    \begin{itemize}
        \item Independent: All metrics independent
        \item Dependent: All pairwise correlations 0.99
        \item Block 1: all guardrail metrics independent of each other and the success metrics, but all success metrics have pairwise correlation 0.99
        \item Block 2: all success metrics independent of each other and the guardrail metrics, but all guardrail metrics have pairwise correlation 0.99
    \end{itemize}
\end{itemize}
For all settings, all additional deterioration and quality metrics are generated as independent of each other and all other metrics with a zero effect.

The dependency structures are chosen to illustrate the most extreme situations
The naive "Only Alpha" correction is simply dividing alpha on the number of tests, i.e., implies $\alpha^*=\frac{\alpha}{S+G}=\frac{0.05}{10}$ and $\alpha^*_-=\frac{\alpha_-}{S+G+D+Q}=\frac{0.05}{14}$. 

\subsection{Results}
For convenience, the results are split by scenario: $H_0$, Status quo, and $H_1$. We present the rejection rates for the following groups of tests:
\begin{itemize}
    \item $R_\mathcal{S}$: superiority tests for success metrics, a rejection means at least one is reject in that replication.
    \item $R_\mathcal{G}$: non-inferiority tests for guardrail metrics, a rejection means all are rejected in that replication.
    \item $R_{\mathcal{D},\mathcal{S}}\bigcup R_{\mathcal{D}, \mathcal{G}}$: inferiority tests for success and guardrail metrics, a rejection means at least one is rejected in that replication.
    \item $R_\mathcal{D}\bigcup R_{\mathcal{Q}}$: tests for deterioration and quality metrics, a rejection means at least one is rejected in that replication.
\end{itemize}
We also present the rejection rates for the three decision rules we develop in the paper, however, note that the design is always under Proposition \ref{prop:alpha_beta_under_correction_with_deterioration_and_quality}. 

\subsubsection{Results under the global $H_0$}
Table \ref{tab:sim_h0} displays the result under the the null hypotheses of the non-inferiority and the superiority tests. 
As expected, all three decision rules are conservative under all settings. This is explained by two things. First, the power of the deterioration test on the guardrail metrics is very high under the null hypothesis of the non-inferiority tests. Second, the probability of rejecting all guardrail metrics simultaneously is very low, with the exception of when all guardrail metrics are strongly correlated, which is confirmed for the Dependent and Block 1 covariance settings. 
\begin{table*}[t]
\begin{tabular}{rrlllll}
  \hline
Covariance & Correction & $R_\mathcal{S}$ & $R_\mathcal{G}$ & $R_\mathcal{D,S}\bigcup R_\mathcal{D,G}$ & $R_\mathcal{D}\bigcup R_\mathcal{Q}$ & Decision Rule 2 \\ 
  \hline
\multirow{3}{*}{Independent} & None & 0.227 & 0.000 & 0.999 & 0.186 & 0.000 \\ 
   & Only Alpha & 0.049 & 0.000 & 0.866 & 0.014 & 0.000 \\ 
   & Prop. 4.1 & 0.049 & 0.000 & 0.999 & 0.014 & 0.000 \\ 
   \cline{2-7}
  \multirow{3}{*}{Dependent} & None & 0.063 & 0.040 & 0.773 & 0.186 & 0.032 \\ 
   & Only Alpha & 0.014 & 0.040 & 0.376 & 0.014 & 0.014 \\ 
   & Prop. 4.1 & 0.014 & 0.040 & 0.771 & 0.014 & 0.014 \\ 
   \cline{2-7}
  \multirow{3}{*}{Block 1} & None & 0.227 & 0.040 & 0.825 & 0.186 & 0.006 \\ 
   & Only Alpha & 0.050 & 0.040 & 0.388 & 0.014 & 0.002 \\ 
   & Prop. 4.1 & 0.050 & 0.040 & 0.776 & 0.014 & 0.002 \\ 
  \cline{2-7}
  \multirow{3}{*}{Block 2} & None & 0.063 & 0.000 & 0.999 & 0.186 & 0.000 \\ 
   & Only Alpha & 0.014 & 0.000 & 0.865 & 0.014 & 0.000 \\ 
   & Prop. 4.1 & 0.014 & 0.000 & 0.999 & 0.014 & 0.000 \\ 
   \hline
\end{tabular}
    \caption{Simulation results of the type I error rates under the null hypotheses of the non-inferiority and the superiority tests, respectively. The rejection of the deterioration, non-inferiority, and superiority tests are presented along with the rejection rate of Decision Rule \ref{dr2}.}
    \label{tab:sim_h0}
\end{table*}
As stated before, the $H_0$ scenario is arguably of little practical relevance, because the null hypothesis of the non-inferiority is decided by the experimenter. A more relevant scenario, in our opinion, is the status quo scenario for which the next section displays the results.

\subsubsection{Results under status quo}
Table \ref{tab:sim_statusQuo} displays the result under the the alternative hypothesis of the non-inferiority and the null hypothesis of the superiority tests. The results clearly show that Proposition \ref{prop:alpha_beta_under_correction_with_deterioration_and_quality} is the correction that has the highest type I error rate, while not crossing the intended $\alpha=0.05$. The Only Alpha correction bounds the type I error rate, but is conservative due to the lack of simultaneous power for the guardrail metrics. As expected, the addition of the deterioration and quality tests does make the rejection rate more conservative, but on a magnitude that has little practical relevance, especially as compared to, e.g., the impact of including beta corrections in the analysis.     
\begin{table*}[t]
    \centering
\begin{tabular}{rrlllll}
  \hline
Covariance & Correction & $R_\mathcal{S}$ & $R_\mathcal{G}$ & $R_\mathcal{D,S}\bigcup R_\mathcal{D,G}$ & $R_\mathcal{D}\bigcup R_\mathcal{Q}$ & Decision Rule 2 \\ 
  \hline
\multirow{3}{*}{Independent} & None & 0.227 & 0.328 & 0.399 & 0.184 & 0.047 \\ 
   & Only Alpha & 0.050 & 0.328 & 0.035 & 0.015 & 0.016 \\ 
   & Prop. 4.1 & 0.050 & 0.877 & 0.035 & 0.015 & 0.042 \\ 
   \cline{2-7}
  \multirow{3}{*}{Dependent} & None & 0.062 & 0.765 & 0.068 & 0.184 & 0.051 \\ 
   & Only Alpha & 0.013 & 0.765 & 0.006 & 0.015 & 0.013 \\ 
   & Prop. 4.1 & 0.013 & 0.966 & 0.006 & 0.015 & 0.013 \\ 
   \cline{2-7}
  \multirow{3}{*}{Block 1} & None & 0.228 & 0.765 & 0.272 & 0.184 & 0.113 \\ 
   & Only Alpha & 0.048 & 0.765 & 0.023 & 0.015 & 0.036 \\ 
   & Prop. 4.1 & 0.048 & 0.966 & 0.023 & 0.015 & 0.045 \\ 
   \cline{2-7}
  \multirow{3}{*}{Block 2} & None & 0.062 & 0.329 & 0.272 & 0.184 & 0.016 \\ 
   & Only Alpha & 0.013 & 0.329 & 0.023 & 0.015 & 0.004 \\ 
   & Prop. 4.1 & 0.013 & 0.878 & 0.023 & 0.015 & 0.011 \\ 
   \hline
\end{tabular}
    \caption{Simulation results of the rejection rates under the alternative hypothesis of the non-inferiority and the null hypothesis of superiority tests. The rejection rates of the deterioration, non-inferiority, and superiority tests are presented along with the rejection rate of Decision Rule \ref{dr2}.}
    \label{tab:sim_statusQuo}
\end{table*}

\subsubsection{Results under the global $H_1$}
Table \ref{tab:sim_h1} displays the result under the the alternative hypotheses of the non-inferiority and the superiority tests.
In this setting, the need for the power correction imposed by Proposition \ref{prop:alpha_beta_under_correction_with_deterioration_and_quality} is clear. For the other two corrections, the rates with which the guardrail metrics are simultaneously significantly non-inferior ($R_\mathcal{G}$) do not reach the intended power of 80\%, which implies that neither can the decision rules. In the settings where the guardrail metrics are all independent (Independent, and Block 1), the rejection rate under no correction and the Only Alpha correction are as low as 30\% for both decision rules---less than half of the intended power.  
\begin{table*}[t]
\centering
\begin{tabular}{rrlllll}
  \hline
Covariance & Correction & $R_\mathcal{S}$ & $R_\mathcal{G}$ & $R_\mathcal{D,S}\bigcup R_\mathcal{D,G}$ & $R_\mathcal{D}\bigcup R_\mathcal{Q}$ & Decision Rule 2\\ 
  \hline
\multirow{3}{*}{Independent} & None & 1.000 & 0.328 & 0.229 & 0.184 & 0.255 \\ 
   & Only Alpha & 1.000 & 0.328 & 0.018 & 0.014 & 0.323 \\ 
   & Prop. 4.1 & 1.000 & 0.874 & 0.017 & 0.014 & 0.859 \\ 
   \cline{2-7}
  \multirow{3}{*}{Dependent} & None & 0.832 & 0.766 & 0.064 & 0.184 & 0.618 \\ 
   & Only Alpha & 0.832 & 0.766 & 0.005 & 0.014 & 0.756 \\ 
   & Prop. 4.1 & 0.980 & 0.966 & 0.005 & 0.014 & 0.952 \\ 
   \cline{2-7}
  \multirow{3}{*}{Block 1} & None & 1.000 & 0.766 & 0.067 & 0.184 & 0.616 \\ 
   & Only Alpha & 1.000 & 0.766 & 0.005 & 0.014 & 0.756 \\ 
   & Prop. 4.1 & 1.000 & 0.966 & 0.005 & 0.014 & 0.952 \\ 
   \cline{2-7}
  \multirow{3}{*}{Block 2} & None & 0.832 & 0.329 & 0.227 & 0.184 & 0.213 \\ 
   & Only Alpha & 0.832 & 0.329 & 0.017 & 0.014 & 0.270 \\ 
   & Prop. 4.1 & 0.980 & 0.874 & 0.017 & 0.014 & 0.842 \\ 
   \hline
\end{tabular}
\caption{Simulation results of the type I error rate under the alternative hypotheses of the non-inferiority and the superiority tests, respectively. The rejection rates of the deterioration, non-inferiority, and superiority tests are presented along with the rejection rate of Decision Rule \ref{dr2}.}\label{tab:sim_h1}
\end{table*}

The results under $H_1$  show that Proposition \ref{prop:alpha_beta_under_correction_with_deterioration_and_quality} bounds the error rates also in the worst-case scenarios, which leads to a higher power than desired in the best-case scenarios. For example, Proposition \ref{prop:alpha_beta_under_correction_with_deterioration_and_quality}  ensures that even under the Block 2 covariance matrix, where all guardrail metrics are independent and all success metrics are dependent, the power is above 80\%. However, to ensure that level of power under the worst-case scenario, the power instead exceeds 94\% in the best-case scenario. 

\section{Discussion and conclusions}
In this paper, we introduce a decision rule framework that uses the results of multiple frequentist hypothesis tests in an experiment to ultimately make one single decision. The primary area of application of our theory is online controlled experiments, known as A/B tests, that technology companies use to run randomized controlled trials at scale to inform their product decision making. Decision rules help move the focus from the results of individual hypothesis tests to what really matters in the end: the overall conclusion of whether the tested change is good enough to release more widely. By carefully tailoring the decision rules, modern-day experimentation platforms can leverage decision rules to standardize decision making. For example, experimentation platforms can enforce that a new change should only be released widely if there is evidence that the change does not negatively impact important business metrics. Our framework clearly lays out the appropriate adjustments to the experiment design to ensure that the tolerated type I and type II error rates are not exceeded at the experiment level. Experimentation is primarily about understanding and managing risks, and with decision rules we put the risks that are relevant to the business front and center.

Our decision rules use measured outcomes, known as metrics, that we classify into success metrics, guardrail metrics, deterioration metrics and quality metrics. Success metrics are metrics where we want to see an improvement, and guardrail metrics are metrics where we want to make sure no regression happens. We show that for a decision rule that includes both success metrics and guardrail metrics, we only need to adjust the type I error $\alpha$ used in tests for the number of success metrics. This result stands in contrast to the wide use of multiple correction methods, which generally do not take into account the role of the metrics they adjust for. For guardrail metrics, we introduce the concept of $\beta$, or type II, corrections. We show that decision rules may be grossly under-powered without correcting the type II error the experiment is designed to achieve for the number of guardrail metrics. Finally, we introduce deterioration tests and quality tests, such as sample ratio mismatch and tests for pre-exposure bias, to the decision rule, and show how these tests affect the type I and type II errors. We combine all of our findings into the decision rule that is used by Spotify's experimentation platform, which includes tests for superiority, non-inferiority, inferiority, and quality. In a Monte Carlo simulation study, we illustrate how the design that our decision rule implies bounds the risks appropriately under various data-generating processes. 

There are many ways to construct decision rules incorporating many metrics, and many ways to control the type I and II errors for them. The decision rules and their corresponding $\alpha$ and $\beta$ corrections that we present in this paper are motivated mainly by simplicity. The straightforward Bonferroni-like corrections are easy to implement, and importantly, easy to explain and understand for most experimenters. In our experience, simplicity is often more important than finding the statistically optimal solution when it comes to experimentation methods that are to be used by many experimenters in large companies. 

Two important aspects of more sophisticated multiple-correction methods e.g., Holm and Hommel, is that they tend to complicate sample size calculations and they do not support straightforward calculation of confidence intervals, if at all. The net effect of using these more sophisticated correction methods is, in our experience, negative as the increased complexity leads to poorer planning of experiments and results that are harder to interpret. An appealing compromise for A/B tests is Nyholt's method that adjusts for the efficient number of independent tests \citep{nyholt2004}. In Section \ref{sec:app:nyholt} in the Appendix, we discuss how Nyholt's method can be used and similarly to \cite{Kong2004} we find that for a small number of metrics (5 success and 5 guardrail) the improvement in efficiency is marginal unless the correlation is strong. However, the improvement becomes substantial when the number of correlated metrics is large. In addition, methods other than sophisticated multiple-testing corrections can help increase efficiency, for example multivariate variance reduction via regression adjustment.

For success metrics it is possible to replace the inferiority tests used in addition to the superiority tests, with non-inferiority tests. Strictly speaking, failing to reject the null of the inferiority test provides no evidence for the null hypothesis, and thus a more formal approach would be to require non-inferiority of all success metrics too. The reason for why inferiority tests are proposed here instead of non-inferiority tests is pragmatic. Using non-inferiority tests would require specifying a non-inferiority margin in addition to the hypothetical effect which would complicate the experiment setup for non-technical experimenters. Any deterioration metric could be replaced by a guardrail metric if the higher standard of non-inferiority tests is desired. Again, we find that having a few business important metrics as automatic deterioration metrics is a reasonable compromise: the experimenter does not need to configure anything for these metrics (no non-inferiority margin), but the company will detect major regressions to these metrics. This is a great example of the trade-off between rigour and not adding too much friction to the product development.   

In our experience, it is always the case that there is information outside the test that decision makers include in their product decision. However, to benefit from A/B testing as a risk management apparatus, we think it is important to move as much as possible into the experiment and explicitly define the decision rule with all the possible aspects in mind. As this paper shows, including caveats for shipping in terms of quality and deterioration tests affect the error rates of the decision. In practice, experimenters hesitate to include certain metrics as guardrail metrics because the sample size needed to power the experiment is too high. If these excluded metrics are still used in the decision making, the effect of the exclusion is only poor management of risk. Including as much as possible in the experiment and its decision rule means a better understanding of the risks you are facing for the the decision you ultimately need to make.

\pagebreak
%%%%%%%%%%%%%%%%%%%%%%%%%%%%%%%%%%%%%%%%%%%%%%
\onecolumn
\begin{appendix}

\section{Improving the efficiency of Proposition \ref{prop:alpha_beta_under_correction_with_deterioration_and_quality} with additional assumptions}\label{app:improving_the_prop}
It is possible to improve the efficiency by making assumptions that makes the inferiority test rejection region not overlap with the rejection region of the superiority or non-inferiority test, respectively, for success metrics and guardrail metrics. 
In the sections below, we describe how deterioration tests affect the type I and type II error rates for superiority and non-inferiority tests and under what conditions the rates are not affected. We then use this information to update Proposition \ref{prop:alpha_beta_under_correction_with_deterioration_and_quality}. 

\subsection{Type I and type II error rates for a success metric with superiority and deterioration tests}
To be specific about the consequences of testing a success metric with an additional deterioration test, we require some new notation. We focus on a single success metric and how this metric contributes to the shipping decision. There are now two tests and four hypotheses for the success metric, defined as:
\begin{align*}
    H_0^{(\mathcal{S})} : \delta\leq 0 &\text{ versus } H_\mathcal{A}^{(\mathcal{S})}: \delta \geq 0 \tag{Superiority test}\\
    H_0^{(-\mathcal{S})} : \delta\geq 0 &\text{ versus } H_\mathcal{A}^{(-\mathcal{S})}: \delta \leq 0 \tag{Inferiority test}
\end{align*}
The deterioration test is here an inferiority test that mirrors the main superiority test, but with the hypotheses reversed.
Because of these two tests, there are now two possible false positive results for this metric: a false positive inferiority test, and a false positive superiority test. The false positive risk in terms of the shipping decision for a set of success metrics with superiority and inferiority tests is given by the probability that the superiority test rejects the null hypothesis while, simultaneously, the inferiority test does not. The additional criterion of non-significant inferiority tests can only make the superiority test more conservative, i.e., reduce the false positive rate and the true positive rate of the shipping decision. However, it is easy to restrict the rejection regions of the deterioration test and the superiority test such that the deterioration test cannot affect the rejection rate of the superiority test under any hypothesis. 
Lemma \ref{lemma:fpr_and_power_success_with_deterioration} formalizes this. 
\begin{lemma}\label{lemma:fpr_and_power_success_with_deterioration}
    Suppose $\alpha$ and $\alpha_{-}$ are the nominal significance levels used for the superiority and inferiority tests, respectively, for a single success metric. Let both tests be based on the test statistic $T:\mathbb{R}^n\mapsto\mathbb{R}$. Let $R$ represent the event that the inferiority test is accepted and the superiority test is rejected. If $1-\alpha > \alpha_{-}$, then $Pr(R \mid \btheta \in \Theta_0) = \alpha$ and $Pr(R \mid \btheta \in \Theta_\mathcal{A}) = 1-\beta$, i.e., the type I and type II error rates for the shipping decision is unaffected by the inferiority test. 
\end{lemma}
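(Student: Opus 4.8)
The plan is to reduce the event $R$ to the rejection event of the superiority test alone, exploiting the fact that both tests are functions of the same statistic $T$. First I would make the two rejection regions explicit. Since the superiority test rejects $H_0^{(\mathcal{S})}$ for large values of $T$, its rejection region has the form $\{T > t_u\}$, where the critical value $t_u$ is pinned down by the level requirement $\mathrm{Pr}(T > t_u \mid \delta = 0) = \alpha$. The inferiority (deterioration) test rejects $H_0^{(-\mathcal{S})}$ for small values of $T$, so its rejection region is $\{T < t_\ell\}$ with $\mathrm{Pr}(T < t_\ell \mid \delta = 0) = \alpha_{-}$. Both critical values are computed at the common boundary $\delta = 0$ of the two nulls.

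The key step is to translate the hypothesis $1-\alpha > \alpha_{-}$ into an ordering of the two critical values. Writing $F$ for the distribution function of $T$ under $\delta = 0$, we have $t_u = F^{-1}(1-\alpha)$ and $t_\ell = F^{-1}(\alpha_{-})$. Because $F^{-1}$ is nondecreasing, the assumption $1-\alpha > \alpha_{-}$ gives $t_u > t_\ell$, so the two rejection regions $\{T > t_u\}$ and $\{T < t_\ell\}$ are disjoint. Consequently, whenever the superiority test rejects (i.e. $T > t_u$), the inferiority test necessarily does not reject, since $T > t_u > t_\ell$; this is exactly the event that the inferiority test is accepted. Hence $\{T > t_u\} \subseteq \{T \ge t_\ell\}$, and therefore $R = \{T > t_u\} \cap \{T \ge t_\ell\} = \{T > t_u\}$.

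Once $R$ collapses to the superiority rejection region, the conclusion is immediate: under the boundary $\delta = 0$ we have $\mathrm{Pr}(R \mid \btheta \in \Theta_0) = \mathrm{Pr}(T > t_u \mid \delta = 0) = \alpha$, and at the design alternative $\delta = \mathrm{MDE}$ we have $\mathrm{Pr}(R \mid \btheta \in \Theta_{\mathcal{A}}) = \mathrm{Pr}(T > t_u \mid \delta = \mathrm{MDE}) = 1-\beta$, both by the defining level and power of the superiority test, which the paper assumes are attained exactly. Thus the deterioration test leaves the type I and type II error rates contributed by this metric unchanged.

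The main obstacle I anticipate is purely bookkeeping the critical-value comparison correctly: the condition pairs $1-\alpha$ (an upper-tail quantity) against $\alpha_{-}$ (a lower-tail quantity), and it is easy to mismatch tails or boundary points. The cleanest way to avoid this is to express both critical values through the single null law $F$ at $\delta = 0$ and invoke monotonicity of $F^{-1}$, as above. No distributional assumption beyond a common, strictly increasing null law for $T$ is needed, so the argument is not special to the Gaussian $z$-test and applies to any test statistic shared by the superiority and deterioration tests.
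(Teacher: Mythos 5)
Your proposal is correct and follows essentially the same argument as the paper's own proof: both express the two critical values through the null distribution function of the common statistic $T$, use monotonicity to conclude $c_{1-\alpha} > c_{\alpha_-}$ under the hypothesis $1-\alpha > \alpha_{-}$, deduce that the superiority rejection region is contained in the acceptance region of the inferiority test, and hence collapse $R$ to the superiority rejection region alone. The only cosmetic difference is that you invoke monotonicity of $F^{-1}$ while the paper invokes monotonicity of $F$; also, your statement of the power conclusion as $1-\beta$ is in fact cleaner than the paper's, which writes $\beta$ in its final display.
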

\begin{proof}
Let $R_{\mathcal{S}}=\{x\in\mathbb{R}:T(x)>c_{1-\alpha}\}$ be the rejection region for the superiority test, where $c_t=F_T^{-1}(t)$ is the critical value and $F_T^{-1}$ is the inverse cumulative distribution function of $T$. Let $R_{\mathcal{D}}=\{x\in\mathbb{R}:T(x)<c_{\alpha_-}\}$ be the rejection region for the inferiority test. We have $1-\alpha = F(c_{1-\alpha})$ and $\alpha_{-} = F(c_{\alpha_-})$.

If $1-\alpha > \alpha_{-}$ then $c_{1-\alpha} > c_{\alpha_-}$ because $F$ is a monotonically increasing function. Suppose $x \in R_\mathcal{S}$ then $T(x) > c_{1-\alpha} > c_{\alpha_-}$ which means that $x \in \lnot R_\mathcal{D}$, so $R_S \subseteq \lnot R_\mathcal{D}$. Therefore $R = R_\mathcal{S} \cap \lnot R_\mathcal{D} = R_\mathcal{S}$, and we get the desired result

\begin{align*}
Pr(x \in R \mid \theta \in \Theta_{0}^{(S)})& = Pr(x \in R_\mathcal{S}\cap \lnot R_\mathcal{D} \mid \theta \in \Theta_{0}^{(S)})\\
&=Pr(x \in R_\mathcal{S} \mid \theta \in \Theta_{0}^{(S)})\\
&=\alpha.
\end{align*}
Similarly,
\begin{align*}
Pr(x \in R \mid \theta \in \Theta_{\mathcal{A}}^{(S)})& = Pr(x \in R_\mathcal{S}\cap \lnot R_\mathcal{D} \mid \Theta_{\mathcal{A}}^{(S)})\\
&=Pr(x \in R_\mathcal{S} \mid \theta \in \Theta_{\mathcal{A}}^{(S)})\\
&=\beta.
\end{align*}
This concludes the proof.
\end{proof}
Lemma \ref{lemma:fpr_and_power_success_with_deterioration} says that if the same test statistic is used for the superiority test and the deterioration test, and $\alpha_-<1-\alpha$, the rejection regions cannot overlap. This is somewhat trivial for the success metric case. In the next section, we present similar results for guardrail metrics with non-inferiority and deterioration tests.  

\subsection{Type I and type II error rates for a guardrail metric with non-inferiority and deterioration tests}
For guardrail metrics, we can reason analogously to the success metrics case, but there are additional nuances due to the potential overlap between rejection regions of the deterioration and non-inferiority tests. There are four relevant hypotheses given by  
\begin{align*}
    H_0^{(\mathcal{G})} : \delta\leq -NIM &\text{ versus } H_\mathcal{A}^{(\mathcal{G})}: \delta \geq -NIM \tag{Non-inferiority test}\\
    H_0^{(-\mathcal{G})} : \delta\geq 0 &\text{ versus } H_\mathcal{A}^{(-\mathcal{G})}: \delta \leq 0 \tag{Inferiority test}
\end{align*}
Note that, contrary to the success metric case, $H_\mathcal{A}^{(\mathcal{G})} \nrightarrow \lnot H_\mathcal{A}^{(-\mathcal{G})}$ even if $\alpha_-<1-\alpha$. That is, the treatment can be non-inferior to control while it is also inferior to control. This happens if the NIM is chosen such that it has power close to one for a given design. If the power for the non-inferiority test is close to one, the power for the deterioration test can, at the same time, also be high in scenarios where the true treatment effect is between $-NIM$ and $0$.
The experimenter is free to set the NIM, and making this choice is a common challenge for experimenters. For this reason, we believe that testing guardrail metrics with both non-inferiority and inferiority tests in experimentation platforms is a way to guard against NIMs that are unnecessarily large.

\begin{lemma}\label{lemma:fpr_and_power_non_inf_deterioration}
    Suppose $\alpha_{+}$ and $\alpha_{-}$ are the nominal significance levels used for the non-inferiority and inferiority test, respectively, for a single guardrail metric. Let the non-inferiority test be based on the test statistic $T_\delta:\mathbb{R} \mapsto\mathbb{R}$, and the inferiority test be based on the test statistic $T:\mathbb{R} \mapsto \mathbb{R}$, with the property $T_\delta = T + m(\delta)$ where $m: \mathbb{R} \mapsto \mathbb{R}$, $m(0) = 0$, $m(-\delta) = -m(\delta)$ and represents the shift in $T(x)$ under the alternative. Let the non-inferiority test have type II error $Pr(R \mid \theta \in \{0\}) = 1-\beta$ when the non-inferiority margin is $\nu$. Let $R$ represent the event that the inferiority test is accepted and the non-inferiority test is rejected i.e. the shipping decision. If $\alpha_{-} < \beta$, then $Pr(R \mid \theta \in \Theta_0^{\mathcal{G}}) = \alpha_{+}$ and $Pr(R \mid \theta \in \Theta_{\mathcal{A}}^{\mathcal{G}}) = 1-\beta$, i.e., the type I and type II error rates for the shipping decision is unaffected by the inferiority test.
\end{lemma}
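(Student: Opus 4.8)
The plan is to follow the same strategy as in Lemma~\ref{lemma:fpr_and_power_success_with_deterioration}: I will rewrite the rejection regions of the non-inferiority and inferiority tests as subsets of the range of the common statistic $T$, and then show that under the hypotheses the non-inferiority rejection region sits entirely inside the inferiority \emph{acceptance} region. Since such a containment is a statement about subsets of $\mathbb{R}$, it holds irrespective of the true $\delta$; consequently the shipping event $R = R_{\mathcal{G}} \cap \lnot R_{\mathcal{D}}$ coincides with the bare non-inferiority rejection event $R_{\mathcal{G}}$, and both the type I and type II error rates are inherited unchanged from the non-inferiority test. The only real work is to locate the two thresholds relative to one another.

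First I would express both tests through $T$. Using $T_\nu = T + m(\nu)$, the non-inferiority test rejects when $T_\nu > c_{1-\alpha_+}$, equivalently when $T > c_{1-\alpha_+} - m(\nu)$, whereas the inferiority test rejects when $T < c_{\alpha_-}$. I would let $F$ denote the distribution function of $T$ when $\delta = 0$ and write $c_t = F^{-1}(t)$. The key bookkeeping observation is that $\delta = 0$ is simultaneously the boundary of the inferiority null $H_0^{(-\mathcal{G})}$ and the effect at which the non-inferiority power is specified, so anchoring \emph{both} critical values to this single distribution is legitimate; the odd-shift properties $m(0)=0$ and $m(-\delta)=-m(\delta)$ are exactly what make this re-centering consistent.

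Next I would turn the design requirement into a threshold identity. Under $\delta = 0$ we have $T \sim F$, so the non-inferiority power is $\Pr_{\delta=0}(T > c_{1-\alpha_+} - m(\nu)) = 1 - F\bigl(c_{1-\alpha_+} - m(\nu)\bigr)$; setting this equal to $1-\beta$ yields $c_{1-\alpha_+} - m(\nu) = F^{-1}(\beta) = c_\beta$. Hence the non-inferiority rejection region is precisely $\{T > c_\beta\}$, while the inferiority acceptance region is $\{T \geq c_{\alpha_-}\}$ with $c_{\alpha_-} = F^{-1}(\alpha_-)$. Because $\alpha_- < \beta$ and $F^{-1}$ is increasing, $c_{\alpha_-} < c_\beta$, so $\{T > c_\beta\} \subseteq \{T \geq c_{\alpha_-}\}$, i.e.\ $R_{\mathcal{G}} \subseteq \lnot R_{\mathcal{D}}$ and $R = R_{\mathcal{G}}$. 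Evaluating $\Pr(R \mid \cdot)$ at the null boundary $\delta = -\nu$ returns $\alpha_+$, and at $\delta = 0$ returns $1-\beta$, which is the claim.

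I expect the main obstacle to be the point flagged in the remark preceding the lemma: unlike the success-metric case, non-inferiority and inferiority are not mutually exclusive (both can hold for $-\nu < \delta < 0$), so disjointness of the rejection regions does \emph{not} follow automatically from a relation like $\alpha_- < 1-\alpha_+$. It must instead be extracted from the power constraint, and the subtlety is to see that once $T_\nu$ is re-centered onto the $\delta=0$ scale the non-inferiority threshold lands exactly at the $\beta$-quantile $c_\beta$ while the inferiority threshold is the $\alpha_-$-quantile $c_{\alpha_-}$. The condition $\alpha_- < \beta$ then appears as the precise requirement $c_{\alpha_-} < c_\beta$ that leaves a gap $[c_{\alpha_-}, c_\beta]$ between the two rejection regions---the guardrail analogue of the $\alpha_- < 1-\alpha$ condition in Lemma~\ref{lemma:fpr_and_power_success_with_deterioration}.
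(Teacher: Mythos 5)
Your proof is correct and takes essentially the same route as the paper's: both rewrite the two rejection regions on the common $T$-scale, use the power specification at $\delta=0$ to place the non-inferiority threshold at the $\beta$-quantile of $T$'s distribution and the inferiority threshold at the $\alpha_-$-quantile, so that $\alpha_-<\beta$ forces $R_{\mathcal{G}}\subseteq\lnot R_{\mathcal{D}}$, hence $R=R_{\mathcal{G}}$ and the error rates are inherited unchanged. The only differences are cosmetic: your sign convention $T_\nu=T+m(\nu)$ versus the paper's $T_{-\nu}=T-m(\nu)$, and your explicit identification of the re-centered non-inferiority threshold as $c_\beta=F^{-1}(\beta)$, which states slightly more cleanly what the paper establishes by comparing probabilities under $\theta\in\{0\}$.
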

\begin{proof}
Let $F(t)$ be the distribution function for $T(x)$. First, we show how to express the critical value of the guardrail in terms of $F(t)$. We want to solve $Pr(T_{-\nu}(x) > c^{(\mathcal{G})}_{1-\alpha_+}) = 1 - \alpha_{+}$, using that $T_\delta(x) = T(x) + m(\delta)$  we get that $Pr(T(x) > m(\nu) + c^{(\mathcal{G})}_{1-\alpha_+}) = 1 - \alpha_{+}$ so the critical value of the guardrail is simply $c^{(\mathcal{G})}_{1-\alpha_+} = -m(\nu) + c_{1-\alpha_+}$ where $c_{1-\alpha_+} = F^{-1}(1-\alpha_{+})$.

Let $R_{\mathcal{G}} = \{x \in \mathbb{R} : T_{-\nu}(x) > -m(\nu) + c_{1-\alpha_+}\}$ be the rejection region for the non-inferiority test (using the previous result for the critical value), similarly, let $R_{\mathcal{D}} = \{x \in \mathbb{R}:T(x) < c_{\alpha_-}\}$ be the rejection region for the inferiority test.

We now show the main result. Suppose $x \in R_{\mathcal{G}}$, we then have

\begin{align*}
    x \in  R_{\mathcal{G}} & \Rightarrow T_{-\nu}(x) > -m(\nu) + c_{1-\alpha_+} \\
    & \Leftrightarrow T(x) - m(\nu) > c_{1-\alpha_+} - m(\nu)\\
    & \Leftrightarrow T(x) > c_{1-\alpha_+}.
\end{align*}
We now need to show that $\alpha_- < \beta \Rightarrow T(x) > c_{1-\alpha_+} > c_{\alpha_-}$ so that $x$ is also in the acceptance region $\lnot R_{\mathcal{D}}$ of the inferiority test,

\begin{align*}
    \alpha_- & < \beta \\
    Pr(T(x) < c_{\alpha_-} \mid \theta \in \{0\}) & < Pr(T_{-\nu}(x) \\&< - m(\nu) + c_{1-\alpha_+} \mid \theta \in \{0\})\\
    Pr(T(x) < c_{\alpha_-} \mid \theta \in \{0\}) & < Pr(T(x) < c_{1-\alpha_+} \mid \theta \in \{0\})\\
    F(c_{\alpha_-} \mid \theta \in \{0\}) & < F(c_{1 - \alpha_+} \mid \theta \in \{0\})\\
    c_{\alpha_-} & < c_{1 - \alpha_+}.\\
\end{align*}
This means that $T(x) > c_{\alpha_-}$ so that $x \in \lnot R_{\mathcal{D}}$ and $R_\mathcal{G} \subseteq \lnot R_\mathcal{D}$. Therefore $R = R_\mathcal{G} \cap \lnot R_\mathcal{D} = R_\mathcal{G}$. Similar to the success metric case, this means that 

\begin{align*}
Pr(x \in R \mid \theta \in \Theta_{0}^{(G)})& = Pr(x \in R_\mathcal{G}\cap \lnot R_\mathcal{D} \mid \theta \in \Theta_{0}^{(G)})\\
&=Pr(x \in R_\mathcal{G} \mid \theta \in \Theta_{0}^{(G)})\\
&=\alpha_{+}.
\end{align*}
Similarly,
\begin{align*}
Pr(x \in R \mid \theta \in \Theta_{\mathcal{A}}^{(G)})& = Pr(x \in R_\mathcal{G}\cap \lnot R_\mathcal{D} \mid \Theta_{\mathcal{A}}^{(G)})\\
&=Pr(x \in R_\mathcal{G} \mid \theta \in \Theta_{\mathcal{A}}^{(G)})\\
&=\beta.
\end{align*}
\end{proof}

\begin{figure}[h]\centering
\begin{tikzpicture}
            \begin{axis}[
              no markers, domain=-7:4, samples=100,
              axis lines*=left,
              xlabel=$z$,
              every axis y label/.style={at=(current axis.above origin),anchor=south},
              every axis x label/.style={at=(current axis.right of origin),anchor=west},
              height=5cm, width=10cm,
              xtick={-3.9712, 0}, xticklabels={{-NIM},0}, ytick=\empty,
              enlargelimits=false, clip=false, axis on top,
              grid = major
              ]
              \addplot [fill=cyan!20, draw=none, domain=-2.326:4] {gauss(0,1)} 
              \closedcycle;
               \addplot [fill=red!80, draw=none, domain=-5:-2.326] {gauss(0,1)} 
              \closedcycle;
               \addplot [fill=green!80, draw=none, domain=-2.326:4] {gauss(-3.9712,1)} 
              \closedcycle;
              \addplot [very thick,cyan!50!black] {gauss(-3.9712,1)};
              \addplot [very thick,cyan!50!black] {gauss(0,1)};
              \draw [yshift=-0.6cm, latex-latex](axis cs:-3.9712,0) -- node [fill=white] {$3.9712\sigma$} (axis cs:0,0);
            \node[label={$H_0^+,\, H_1^-$}] at (290,380) {};
            \node[label={$H_1^+,\, H_0^-$}] at (690,380) {};
            \end{axis}
        \end{tikzpicture}
    \caption{Illustration of the relation between rejection regions of the non-inferiority test and the inferiority test when $\beta=\alpha_{-}=0.01$. The green area is the rejection region for the non-inferiority test, the red area is the rejection region for the inferiority test, and the blue area is the power ($1-\beta$) of the non-inferiority test under the alternative $\delta=0$.}
    \label{fig:non-inf-with-inferiority}
\end{figure}
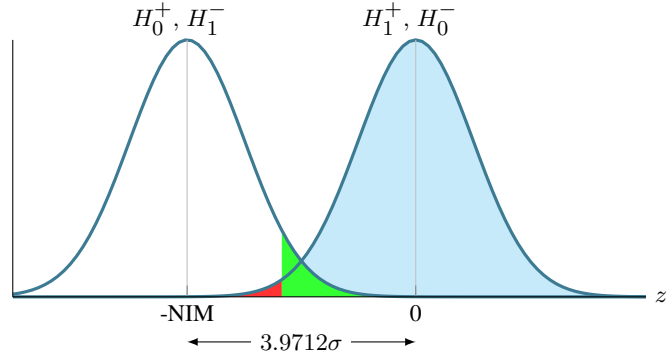
The fact that $\beta \geq \alpha_{-}$ is a sufficient additional criterion for the rejection regions to not overlap in the non-inferiority case is perhaps most easily seen graphically. Figure \ref{fig:non-inf-with-inferiority} displays the relation between the regions when $\beta=\alpha_{-}$. From the figure it is clear that the regions would overlap for a fixed $\beta$ if $\alpha_{-}$ increases. Analogously, a decrease in $\beta$ would make the regions overlap for a fixed $\alpha_{-}$. Note that as long as $\alpha_{-}\leq 1- \alpha_{+}$, the value of $\alpha_{+}$ does not affect the overlap. This follows since if $\alpha_{+}$ changes, the implied NIM for which the non-inferiority test has power $1-\beta$ changes accordingly.

If the NIM is chosen such that the non-inferiority test has high power, the deterioration test almost always rejects under $H^{(\mathcal{G})}_0$. In this case, the type I error rate of Decision rule \ref{dr1} is close to zero. However, in our opinion, this is of little practical relevance as the NIM, which determines $H^{(\mathcal{G})}_0$, is unlikely to be true in practice. It is selected because it is the threshold for how much deterioration we are willing to accept, and is not a scenario the experimenter expects to see or even necessarily finds likely.

\subsection{An updated version of Proposition \ref{prop:alpha_beta_under_correction_with_deterioration_and_quality}}
Here we improve the $\alpha$ and $\beta$ corrections slightly by enforcing non-overlapping rejections regions. 
\begin{proposition}
    Let $\alpha$ and $\beta$ be the nominal type I and type II rates that the decision rule should satisfy, and let there be $G$ guardrail metrics, $S$ success metrics, $D$ deterioration tests in addition to those for the success and guardrail metrics, and $Q$ quality metrics/tests. Let $\phi=\frac{D+S+Q-1}{S+G+D+Q}\alpha_{-}$. Moreover, for the guardrail metrics, let the non-inferiority and deterioration tests be based on the test statistics $T_\delta:\mathbb{R}^n\mapsto\mathbb{R}$, and $T:\mathbb{R}^n\mapsto\mathbb{R}$, where $T_\delta = T + m(\delta)$. If   
    \begin{enumerate}
        \item $\alpha_-$ is chosen such that $\alpha_- \leq 1-\alpha$, and $\alpha_- \leq \beta$
        \item each deterioration test is performed with significance level 
        $\alpha_-^*= \frac{\alpha_-}{S+G+D+Q}$
        \item each success metric is tested using a superiority test using the significance level 
        \begin{align*}
            \alpha^*=\begin{cases} 
            \frac{\alpha}{(1- \frac{1}{S+G+D+Q}\alpha_-)S} & \text{ if } S+D\geq 2\\ 
            \frac{\alpha}{S}& \text{ otherwise } \end{cases}
        \end{align*} 
        \item each guardrail metric is tested using a non-inferiority test and the significance level $\alpha$
        \item all superiority and non-inferiority tests are designed for the power level
        \begin{align*}
            \beta^*=\begin{cases} 
            \frac{\beta - \phi}{(1 -\phi)(G+1)} & \text{ if } S>0\\ 
            \frac{\beta - \phi}{(1 -\phi)G}& \text{ if } S=0 \end{cases}
        \end{align*} 
    \end{enumerate}
   then the error rates for Decision Rule \ref{dr2} are at most $\alpha$ and $\beta$.
\end{proposition}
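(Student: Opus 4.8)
The plan is to follow the two-part structure of the proof of Proposition \ref{prop:alpha_beta_under_correction_with_deterioration_and_quality}, bounding the type I rate under the global null and the type II rate under the favourable alternative separately, but now invoking Lemmas \ref{lemma:fpr_and_power_success_with_deterioration} and \ref{lemma:fpr_and_power_non_inf_deterioration} to strip out the contribution of each deterioration test that shares a test statistic with a superiority or non-inferiority test. The first step is therefore to verify that, after substituting the adjusted per-test levels $\alpha^*$, $\alpha_-^*$ and $\beta^*$, the non-overlap hypotheses of those lemmas hold---$1-\alpha^*>\alpha_-^*$ on the success side and $\alpha_-^*\leq\beta^*$ on the guardrail side---so that for every success metric the superiority rejection region lies inside the acceptance region of its deterioration test, and likewise for every guardrail metric. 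The standing assumptions $\alpha_-\leq 1-\alpha$ and $\alpha_-\leq\beta$ are the aggregate counterparts of these per-test requirements.

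For the type I part I would argue, exactly as in Proposition \ref{prop:alpha_beta_under_correction_with_deterioration_and_quality}, that the ship event is contained in $R_{(\mathcal{S})}\cap R_{(\mathcal{G})}$, so the rate is at most $\min\big(\mathrm{Pr}(R_{(\mathcal{S})}\mid\bxi),\,\mathrm{Pr}(R_{(\mathcal{G})}\mid\bxi)\big)$ and the remaining tests can only lower it. The new ingredient is the inflation of $\alpha^*$. Because non-overlap means the success deterioration tests no longer erode the superiority rejections, the only residual source of conservativeness on the success side is the requirement that at least one \emph{spare} deterioration or quality test (each of level $\alpha_-^*$) also fail to fire; this is what supplies the factor $1-\alpha_-^*$, so that solving $S\,\alpha^*(1-\alpha_-^*)=\alpha$ returns the stated $\alpha^*$. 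The branch $S+D\geq 2$ certifies that such a spare test exists---either a second success metric whose deterioration test is freed up through non-overlap, or an additional deterioration metric---and when it does not one falls back to the safe choice $\alpha^*=\alpha/S$.

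For the type II part I would reuse the factorisation $\mathrm{Pr}(\text{ship}\mid\bkappa)=\mathrm{Pr}(\lnot R_{(\mathcal{D})}\cap\lnot R_{(\mathcal{Q})}\mid\bpsi)\,\mathrm{Pr}(R_{(\mathcal{S})}\cap R_{(\mathcal{G})}\mid\bkappa)$. The improvement is in the first factor: conditioned on $\bpsi$, every guardrail non-inferiority test has rejected, so by Lemma \ref{lemma:fpr_and_power_non_inf_deterioration} all $G$ guardrail deterioration tests are forced into their acceptance regions and leave the union bound, and since at least one success superiority test has rejected, Lemma \ref{lemma:fpr_and_power_success_with_deterioration} removes one further success deterioration test. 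Only $S-1+D+Q$ terms survive, each bounded by $\alpha_-^*=\alpha_-/(S+G+D+Q)$, giving $\mathrm{Pr}(\lnot R_{(\mathcal{D})}\cap\lnot R_{(\mathcal{Q})}\mid\bpsi)\geq 1-\phi$ with $\phi=\frac{S+D+Q-1}{S+G+D+Q}\alpha_-$. Combining with $\mathrm{Pr}(R_{(\mathcal{S})}\cap R_{(\mathcal{G})}\mid\bkappa)\geq 1-\beta'$ from Proposition \ref{prop:alpha_beta_under_correction} (per-test power $1-\beta^*$, so $\beta'=(G+1)\beta^*$ when $S>0$) and solving $(1-\phi)(1-\beta')=1-\beta$ yields $\beta'=\frac{\beta-\phi}{1-\phi}$ and hence the stated $\beta^*$.

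The main obstacle will be the bookkeeping that exactly one success deterioration test may be discarded in the type II bound, and that exactly one spare test supplies the $1-\alpha_-^*$ factor in the type I bound. Because the decision rule requires only that \emph{at least one} success metric be superior, there is no fixed index whose deterioration test is deterministically saved; one cannot simply delete a named term but must argue that, for whichever success metric realises the rejection, Lemma \ref{lemma:fpr_and_power_success_with_deterioration} applies to it, while bounding the remaining terms uniformly. Making this rigorous under an arbitrary dependence structure---rather than the independent additional metrics used in the simulation---is the delicate point, and it is also where the asymmetry between the $S+D\geq 2$ branch and the fallback $\alpha/S$ branch originates; I would handle it by a worst-case argument over which single metric is credited with the superiority rejection.
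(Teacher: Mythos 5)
Your proposal is correct and takes essentially the same route as the paper's own proof: the same factorisation of the ship probability into $\mathrm{Pr}\left(\lnot R_{(\mathcal{D})}\bigcap \lnot R_{(\mathcal{Q})} \mid \bpsi\right)\mathrm{Pr}\left(R_{(\mathcal{S})}\bigcap R_{(\mathcal{G})}\mid \cdot\right)$, the same use of Lemmas \ref{lemma:fpr_and_power_success_with_deterioration} and \ref{lemma:fpr_and_power_non_inf_deterioration} to zero out all $G$ guardrail deterioration terms and one success deterioration term, and the same equations $S\alpha^*(1-\alpha_-^*)=\alpha$ and $(1-\phi)\left(1-(G+1)\beta^*\right)=1-\beta$ for the corrected levels. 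The ``delicate point'' you flag---that no fixed success index is deterministically saved by the superiority rejection---is handled in the paper exactly by the worst-case bookkeeping you propose, namely bounding the $S-1$ possibly surviving success-deterioration terms by $\alpha_-^*$ each.
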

\begin{proof}
 The proof is almost identical to the proof of Proposition \ref{prop:alpha_beta_under_correction_with_deterioration_and_quality}. Now we have under the global null
\begin{align*}
&\mathrm{Pr}\left(R_{(\mathcal{S})}\bigcap R_{(\mathcal{G})}\bigcap \lnot R_{(\mathcal{D})} \bigcap \lnot R_{(\mathcal{Q})} \Big|\bxi \right)\\
&\mathrm{Pr}\left(\lnot R_{(\mathcal{D})} \bigcap \lnot R_{(\mathcal{Q})}   \mid R_{(\mathcal{S})}\bigcap R_{(\mathcal{G})}; \bxi \right)\mathrm{Pr}\left(R_{(\mathcal{S})}\bigcap R_{(\mathcal{G})}\Big| \bxi \right)
\end{align*}
Which implies
\begin{align*}
    &\mathrm{Pr}\left(\lnot R_{(\mathcal{D})}\bigcap \lnot R_{(\mathcal{Q})}  \Big|\underbrace{R_{(\mathcal{S})}\bigcap R_{(\mathcal{G})}; \bxi}_{\bpsi} \right)= \\
    &\mathrm{Pr}\left(\lnot \left[
      R_{(\mathcal{D, S})}^{(1)}\cup \dots \cup  R_{(\mathcal{D, S})}^{(S)}\cup \lnot R_{(\mathcal{D, G})}^{(1)}\cup \dots \cup   R_{(\mathcal{D, G})}^{(G)}  \cup  R_{(\mathcal{D})}^{(1)}\cup \dots \cup   R_{(\mathcal{D})}^{(D)} 
    \right]\bigcap \lnot \left[ R_{(\mathcal{Q})}^{(1)} \cup\dots \cup R_{(\mathcal{Q})}^{(Q)} \right] \Big|\bpsi \right)= \\
   & \mathrm{Pr}\left(  \lnot R_{(\mathcal{D, S})}^{(1)}\cap \dots \cap \lnot R_{(\mathcal{D, S})}^{(S)}\cap \lnot R_{(\mathcal{D, G})}^{(1)}\cap \dots \cap  \lnot R_{(\mathcal{D, G})}^{(G)}  \cap \lnot R_{(\mathcal{D})}^{(1)}\cap \dots \cap  \lnot R_{(\mathcal{D})}^{(D)}\cap \lnot R_{(\mathcal{Q})}^{(1)} \cap\dots \cap \lnot R_{(\mathcal{Q})}^{(Q)} \Big| \bpsi\right)\leq \\
    &\min\left( \min_{s} \mathrm{Pr}\left(\lnot R_{(\mathcal{D,S})}^{(s)} \mid \bpsi \right) ,\min_g \mathrm{Pr}\left(\lnot R_{(\mathcal{D,G})}^{(g)} \mid \bpsi \right),\min_t \mathrm{Pr}\left(\lnot R_{(\mathcal{D})}^{(t} \mid \bpsi \right), \min_q \mathrm{Pr}\left(\lnot R_{(\mathcal{Q})}^{(q} \mid \bpsi \right) \right)=\\
    &\min(1-\alpha^*_-, 1, 1-\alpha^*_-,1-\alpha^*_-) = 1-\alpha^*_- = 1-\frac{1}{S+G+D+Q}\alpha_-,  
\end{align*}
where, despite Lemma \ref{lemma:fpr_and_power_success_with_deterioration}, $\mathrm{Pr}\left(\lnot R_{(\mathcal{D,S})}^{(s)} \mid \bpsi \right) = 1- \alpha_-^*$ and not $1$. This is because conditioning on $R_{(\mathcal{S})}$ only ensures that at least one success metric is significantly superior. Since we aim to bound the risk, it is enough if we can prove the proposition under the assumption that all $S$ metrics have $1-\alpha_-^*$ chance of deteriorating.
Together this means that 
\begin{align*}
   \mathrm{Pr}\left(R_{(\mathcal{S})}\bigcap R_{(\mathcal{G})}\bigcap \lnot R_{(\mathcal{D})}\bigcap \lnot R_{(\mathcal{Q})} \Big|\bxi \right) &\leq (1- \frac{1}{S+G+D+Q}\alpha_{-})\alpha < \alpha.
\end{align*}
We can adjust our correction of $\alpha$ to make the test less conservative by solving 
\begin{equation*}
    (1- \frac{1}{S+G+D}\alpha_{-})\alpha^* = \alpha
\end{equation*}
which implies that we should correct $\alpha$ according to
\begin{align*}
    \alpha^* = \frac{\alpha}{(1- \frac{1}{S+G+D}\alpha_{-})S},
\end{align*}
to minimize the conservativeness without any assumptions on the data generating process.

for the type II error rate we follow the proof of Proposition \ref{prop:alpha_beta_under_correction_with_deterioration_and_quality} closely. The only changes are that
\begin{align*}
&\mathrm{Pr}\left( \lnot R_{(\mathcal{D})}\bigcap \lnot R_{(\mathcal{Q})}  \Big| \underbrace{R_{(\mathcal{S})}\bigcap R_{(\mathcal{G})} ; \theta^{(\mathcal{S})}=\theta_{\mathcal{A}}^{(\mathcal{S})}, \theta^{(\mathcal{G})}=\theta_{\mathcal{A}}^{(\mathcal{G})}}_{:=\bpsi}\right) \geq \\
&  1-\underbrace{\sum_{s=1}^S\mathrm{Pr}\left( R_{(\mathcal{D, S})}^{(s)} \Big| \bpsi\right)}_{\leq\frac{S-1}{S+G+D+Q}\alpha_{-}} - \underbrace{\sum_{g=1}^G \mathrm{Pr}\left( R_{(\mathcal{D, G})}^{(g)} \Big| \bpsi\right)}_{=0} 
   - \underbrace{\sum_{t=1}^D\mathrm{Pr}\left( R_{(\mathcal{D})}^{(t)} \Big| \bpsi\right)}_{\leq\frac{D}{S+G+D+Q}\alpha_{-}}-
   \underbrace{\sum_{q=1}^Q\mathrm{Pr}\left( R_{(\mathcal{Q})}^{(q)} \Big| \bpsi\right)}_{\leq\frac{Q}{S+G+D+Q}\alpha_{-}} &\geq 1- \frac{S+D+Q-1}{S+G+D+Q}\alpha_{-}.
\end{align*}
Were the results under each sum is motivated as follows. First, we note that conditioned on that $R_{(\mathcal{G})}$ has occurred, and the conditions of the proposition, it follows that
$\mathrm{Pr}\left( R_{(\mathcal{D, G})}^{(g)} \mid \bkappa\right) = 0 \,\, \forall\,\, g = 1,...,G$. We can bound $\mathrm{Pr}\left( R_{(\mathcal{D, S})}^{(s)} \mid \bkappa\right) < \frac{1}{S+G+D+Q}\alpha_{-}$ for the worst case scenario of $S-1$ non-significant superiority tests conditioned on $R_{(\mathcal{S})}$. For the remaining $D$ deterioration tests, they are assumed under their null for the global alternative which means $\mathrm{Pr}\left( R_{(\mathcal{D})}^{(t)} \mid \bkappa\right) = \frac{1}{S+G+D+Q}\alpha_{-}\,\, \forall \,\, t=1,...,D$. 
Plugging this into the the proof of Proposition \ref{prop:alpha_beta_under_correction_with_deterioration_and_quality} and following the same steps solving for $\beta^*$ immediately implies that under the correction given in the proposition, the type II error rate is bounded by $\beta$.  
\end{proof}

It can seem counter-intuitive that including deterioration tests makes the overall rejection rule conservative given that Lemmas \ref{lemma:fpr_and_power_success_with_deterioration} and \ref{lemma:fpr_and_power_non_inf_deterioration} show that the rejection region for the deterioration and superiority or non-inferiority tests, respectively, are non-overlapping. There are several aspects at play here. First, there are potentially $D$ additional deterioration tests that can be falsely significant under the alternative. Second, importantly, only one success metric has to be superior to reject the overall null hypothesis---there are $D+S-1$ "chances" of having a falsely significant deterioration test even under the alternative, and when the other parts of the decision rule are fulfilled.
Finally, for guardrail metrics, it is indeed true that under the proposition's condition that $\alpha_- \leq \beta$, the rejection regions of the deterioration test and the non-inferiority test do not overlap rendering the deterioration tests for guardrail metrics completely redundant. However, we still use deterioration tests because we cannot fully control $\beta$ for each metric in practice. The type II error rate for a given NIM is achieved by setting the sample size, and there is only one sample size per experiment. Consequently, all metrics but the metric with the largest variance will be overpowered in practice. If there is a big difference in variance between some of the metrics, it is not implausible that the actual implied type II error rate for some guardrail metrics (given the NIM, variance, and the sample size used for the experiment) is smaller than $\alpha_-$. In this case, it is possible that the non-inferiority test and the deterioration test are both simultaneously significant. For completeness, if guardrail metrics are allowed to have type II error rates smaller than $\alpha_-$, the deterioration adjustment parameter should be adjusted to  $\phi=\frac{S-1+G+D}{S+G+D}\alpha_{-}$. 

\subsection{Additional approximate improvements}
The $\beta$ correction in the updated Proposition bounds the type II error rate under a quite extreme worst-case scenario. This is because under the alternative, and given that at least one success metric has improved, the probability that any of the remaining success metrics is significantly inferior is very low. In fact, the probability is so low that it is usually negligible. We formalize this in Remark \ref{remark:power_dr2}.
\begin{remark}\label{remark:power_dr2}
 If in the updated proposition, the test statistics of the deterioration tests and superiority tests for the success metrics are normally distributed, and the correction of $\beta$ is replaced with
 \begin{align*}
            \beta^*=\begin{cases} 
            \frac{\beta - \phi^*}{(1 -\phi^*)(G+1)} &   S>0\\ 
            \frac{\beta - \phi^*}{(1 -\phi^*)G}&   S=0 \end{cases}, 
        \end{align*} 
where $\phi^* = \frac{D+Q}{S+G+D+Q}\alpha_{-}$, the type II error rate is $\lessapprox \beta$.
\end{remark}
\begin{proof}
We are following the second part of the proof for the updated Proposition. Now, since $\mathrm{Pr}\left( R_{(\mathcal{D, S})}^{(s)} \mid \bkappa\right) = Pr(Z_{(\mathcal{D, S})}^{(s)}<c_{\alpha^*_-})$ where $Z\sim N(\theta_\mathcal{A}^{(S)}, 1)$ it follows that
For $\mathrm{Pr}\left( R_{(\mathcal{D, S})}^{(s)} \mid \bkappa\right)\rightarrow 0$ as $\beta \rightarrow 0$. Importantly, even for $\beta<0.5$ it is the case that $\mathrm{Pr}\left( R_{(\mathcal{D, S})}^{(s)} \mid \bkappa\right)\approx 0$ and this probability decreases with $S$, as $S$ goes into the correction of both $\alpha^*$ and $\alpha^*_-$. 
Table \ref{tab:prob_of_deterioration_under_alternative} displays some examples of the probability of significant deterioration for any of $S-1$ success metrics that might not be superior under $\bkappa$, i.e. $\sum_{s=1}^{S-1}\mathrm{Pr}\left( R_{(\mathcal{D, S})}^{(s)} \mid \bkappa\right)=(S-1)\mathrm{Pr}\left( R_{(\mathcal{D, S})}^{(s)} \mid \bkappa\right)$. Even for the extreme case when $\alpha_{+}=0.1$ and $\alpha_-=\beta=0.5$, the probability that deterioration is rejected among any of the $S-1$ (since conditioned on $R_{(\mathcal{S})}$ at least one success metric must be significantly superior which by Lemma \ref{lemma:fpr_and_power_success_with_deterioration} implies non-significant deterioration test for that metric) is only $0.02$. For all other cases the probability is less than $0.01$ and rapidly declining.   
\begin{table}[ht!]
\centering
\begin{tabular}{clllll}
  \hline
 $\beta=\alpha_{-}$ & $S=2$ & $S=5$ & $S=10$ & $S=20$ & $S=30$ \\ 
  \hline
0.2 & 0.000082 & 0.000007 & 0.000001 & 0.000000 & 0.000000 \\ 
  0.3 & 0.000674 & 0.000072 & 0.000010 & 0.000001 & 0.000000 \\ 
  0.4 & 0.003074 & 0.000411 & 0.000067 & 0.000010 & 0.000003 \\ 
  0.5 & 0.010188 & 0.001704 & 0.000322 & 0.000055 & 0.000019 \\ 
   \hline
\end{tabular}
\caption{The probability of significant deterioration for any of $S-1$ success metrics that might not be superior under $\bkappa$. $\beta$ is varying $0.2, 0.3, 0.4 0.5$, and as a worst case scenario $\alpha_-=\beta$. $\alpha_{+}=0.1$.}\label{tab:prob_of_deterioration_under_alternative}
\end{table}
This means that
\begin{align*}
   &\mathrm{Pr}\left( \lnot R_{(\mathcal{D})}\cap \lnot R_{(\mathcal{D})}\lnot R_{(\mathcal{Q})} \mid \bpsi\right) = \\
& \gtrapprox 1 - \frac{D+Q}{S+G+D+Q}\alpha_{-},
\end{align*}
from which the remark follows immediately from the previous proof.
\end{proof}

\section{Examples of global false and true positive rates} \label{sec:app:examples_fpr_power_dr1}
Although Proposition \ref{prop:alpha_beta_under_correction} guarantees that the risks are bounded, Tables \ref{tab:casesFPR} and \ref{tab:casesPower} provide further insights into how tight these bounds are under a few specific scenarios. Table \ref{tab:casesFPR} displays the global type I error rate with and without the corrections of Decision Rule \ref{dr1} under independence or perfect linear dependence. We assume that without correction $\alpha$ is used for all tests. 
\begin{table}[!hbt]
    \centering
    \begin{tabular}{cc|lll}
    \toprule
        \multicolumn{2}{c}{Metrics used} & & \multicolumn{2}{c}{Global type 1 error rate}\\
        \cline{1-2} \cline{4-5}
        Guardrail & \multicolumn{1}{c}{Success} & Covariance & Without correction & With correction \\
        \midrule
        \checkmark & x & Perfectly correlated & $\alpha$ & $\alpha$ \\
        \checkmark & x &Independent & $\alpha^G$ &$\alpha^G$ \\
        x & \checkmark& Perfectly correlated & $\alpha$ & $\frac{\alpha}{S}$ \\
        x & \checkmark &Independent & $1-(1-\alpha)^S$ & $1-(1-\frac{\alpha}{S})^S \lesssim \alpha$ \\
        \checkmark& \checkmark&All independent & $\alpha^{G}\times \left(1-(1-\alpha)^S\right)$ & $\alpha^{G}\times \left(1-(1-\frac{\alpha}{S})^S\right) \lesssim \alpha^{G+1}$ \\
        \checkmark& \checkmark&All perfectly correlated & $\alpha$ & $\frac{\alpha}{S}$\\
         \bottomrule
    \end{tabular}
    \caption{Type I error rate for Decision Rule \ref{dr1} under different settings of success and guardrail metrics with different covariance structures for the test statistics, and, the type I error rate in terms of the intended overall type I error rate under the applied correction. All cases are under the hypotheses null for all tests. The notation $a\lesssim b$ means $a<b+\epsilon$ where $\epsilon>0$.}\label{tab:casesFPR}
\end{table}
With only guardrail metrics, the global type I error rate ranges between $\alpha^G$ and $\alpha$, which follows from the nature of the IU test. When there are only success metrics involved, the global type I error rate is between $\alpha$ and $1-(1-\alpha)^S$, which makes the need for a proper adjustment immediately clear---with $\alpha=0.05$ and e.g. $S=5$ success metrics, the global type I error rate without a correction exceeds 20\%. When both types of metrics are included, the global type I error rate is always controlled and on the conservative side. The actual rate is determined by the number of each type of metric and on the dependency structure. 
Table \ref{tab:casesPower} presents the global power rates with and without the corrections of Decision Rule \ref{dr1} under independence or perfect linear dependence. We assume that without correction all tests are design to achieve $\beta$. 
\begin{table}[!hbt]
    \centering
    \begin{tabular}{cc|lll}
    \toprule
        \multicolumn{2}{c}{Metrics used} & & \multicolumn{2}{c}{Global true positive rate}\\
        \cline{1-2} \cline{4-5}
        Guardrail & \multicolumn{1}{c}{Success} & Covariance & Without correction & With correction \\
        \midrule
        \checkmark & x & Perfect corr. & $1-\beta$ & $1-\frac{\beta}{G}$ \\
        \checkmark & x & Independent & $(1-\beta)^G<1-\beta $ &$(1-\frac{\beta}{G})^G \gtrsim 1-\beta$ \\
        x & \checkmark& Perfect corr. & $1-\beta$ & $1-\beta$ \\
        x & \checkmark &Independent & $1-\beta^S$ & $1-\beta^S$ \\
        \checkmark& \checkmark&Independent & $(1-\beta)^G (1-\beta^S) < 1-\beta \forall G>1 $ & $ (1-\frac{\beta}{G+1})^G \left(1-\left(\frac{\beta}{G+1}\right)^S\right) > 1-\beta $ \\
        \checkmark& \checkmark&Perfect corr. & $1-\beta$ & $1-\frac{\beta}{G+1}$\\
         \bottomrule
    \end{tabular}
    \caption{True positive rate (power) for Decision Rule \ref{dr1} under different settings of success and guardrail metrics with different covariance structures for the test statistics, and, the true positive rate in terms of the intended overall true positive rate under the applied correction. All cases are under the hypotheses null for all tests. The notation $a\gtrsim b$ means $a>b+\epsilon$ where $\epsilon>0$.}\label{tab:casesPower}
\end{table}

\section{A note on sequential testing of deterioration}\label{app:seq_test}
In the previous sections it is shown that if a fixed horizon test is used, the type I error rate of the non-inferiority and superiority tests are not affected by adding deterioration tests under mild conditions on $\beta$ and $\alpha_{-}$. However, deterioration tests are typically added to ensure that the experiment can be aborted if some metrics start to deteriorate. To be able to maintain the type I error rate for the deterioration tests ($\alpha_{-}$), this requires using sequential tests for the deterioration tests. 
Sequential deterioration tests add complexity to the error rate calculations. This follows because even under conditions where the rejection regions for the deterioration and the superiority/non-inferiority tests are not overlapping at the last time point, it is possible that the deterioration test is significant at a previous intermittent analysis even in the case where the superiority or non-inferiority test is significant when the experiment is concluded. 

Let $K$ be the number of intermittent analyses for the deterioration test, such that the superiority tests and non-inferiority tests are only performed at $k=K$. The  rejection rates under some hypothesis is then given by
\begin{equation}
  Pr(R_\mathcal{S} \bigcap \left(\lnot R^{(1)}_\mathcal{-S} \cap \dots \cap \lnot R^{(K-1)}_\mathcal{-S}\right) \mid \theta)  
\end{equation}
 for the superiority test, 
and 
\begin{equation}
  Pr(R_\mathcal{G} \bigcap \left(\lnot R^{(1)}_\mathcal{-G} \cap \dots \cap \lnot R^{(K-1)}_\mathcal{-G} \right) \mid \theta)  
\end{equation}
for the non-inferiority test. 

It is difficult to bound these risks without referring to a specific test. E.g., in the Group Sequential Test (GST) literature, stopping if a treatment effect is going in the wrong direction is possible using so called futility bounds. It is well known that the incorporation of such bounds decrease the type I error rate and power for the superiority test. If one commits to always stopping if futility bounds are crossed, so-called "binding futility bounds", one can adjust the bounds for the superiority test to achieve the intended type I error rate. 
As we will show below, in practice, adding sequential deterioration tests will only mildly affect the rates. 

Here, we simply estimate the type I error rate and power of the superiority test and non-inferiority test with deterioration, respectively, by Monte Carlo simulation. One hundred intermittent analyses are performed for the deterioration test using GST\footnote{The simulation code is available in the online supplementary material, and it is easy to to exchange the GST for any other sequential test.}. The alternatives and NIM are selected to yield 80\% power for the fixed-horizon tests without deterioration. One million replications were performed for each setting.  

Table \ref{tab: MC_sup_and_noninf_with_deterioration} displays the results. 
\begin{table}[hbt!]
\centering
\begin{tabular}{rrrcllll}
\# Analyses &Metric & Hyp. & Design & sig. decision & sig. deteriorating  & sig. super. & sig. both directions \\ 
  \hline
100&success & H0 & $\alpha=.05$ & 0.04956 & 0.04974 & 0.04961 & 0.00005 \\ 
100&  success & H1 & $\alpha=.05,\beta=.2$ & 0.80077 & 0.00119 & 0.80135 & 0.00058 \\ 
100&  guardrail & H0 & $\alpha=.05,\beta=.2$ & 0.04792 & 0.72853 & 0.04901 & 0.00109 \\ 
100&  guardrail & H1 & $\alpha=.05,\beta=.2$ & 0.79278 & 0.04939 & 0.80069 & 0.00791 \\ 
100&  guardrail & H0 & $\alpha=.05,\beta=.05$ & 0.04399 & 0.91578 & 0.04901 & 0.00502 \\ 
100&  guardrail & H1 & $\alpha=.05,\beta=.05$ & 0.92964 & 0.04939 & 0.94997 & 0.02033 \\ 
   \hline
10&success & H0 & $\alpha=.05$ & 0.04982 & 0.05011 & 0.04985 & 0.00002 \\ 
 10& success & H1 & $\alpha=.05,\beta=.2$ & 0.79924 & 0.00060 & 0.79939 & 0.00015 \\ 
10&  guardrail & H0 & $\alpha=.05,\beta=.2$ & 0.04883 & 0.73604 & 0.04985 & 0.00101 \\ 
10&  guardrail & H1 & $\alpha=.05,\beta=.2$ & 0.79263 & 0.05011 & 0.79939 & 0.00676 \\ 
 10& guardrail & H0 & $\alpha=.05,\beta=.05$ & 0.04458 & 0.92031 & 0.04985 & 0.00526 \\ 
 10& guardrail & H1 & $\alpha=.05,\beta=.05$ & 0.93043 & 0.05011 & 0.94983 & 0.01940 \\ 
   \hline
\end{tabular}
\caption{False and true positive rates for the deterioration tests, the superiority test and non-inferiority test under the null and alternatives of the superiority and non-inferiority test. The "sig. decision" column displays the proportion of replications that had no significant deterioration and rejected the null hypothesis of the superiority or non-inferiority test, respectively. }\label{tab: MC_sup_and_noninf_with_deterioration}
\end{table}
For success metrics, the results indicate that using deterioration tests in addition to the superiority test for success metrics imposes negligible conservativeness under the null and the alternative.

Under the non-inferiority null hypothesis, the type I error rate decreases slightly due to deterioration, but to a degree without much practical relevance. The power of the deterioration test is 72\%, which is lower than 80\% because the deterioration test uses a GST test with a 100 intermittent analyses which decreases the overall power as compared to a fixed horizon test. Under the non-inferiority alternative, the rejection rate of the deterioration test decreases to the expected around 5\%, and the power of the non-inferiority is not practically affected by the deterioration test. 

It is of course natural to consider sequential tests also for quality tests, such as sample ratio mismatch. However, since these tests tends to separate from the success and guardrail metrics, using sequential tests has no implications. As long as the quality test that are used are valid, i.e., bounds the type I error rates, the propositions apply also with sequential tests for the quality metrics. In the next section we formalize the inclusion of quality metrics in the decision rule.

\section{Using Nyholt's method of efficient number of independent tests}\label{sec:app:nyholt}
As discussed above, the lack of assumption regarding the data generating process implies that under many processes the decision rule with have too low type I error rate and to high power. Although there are many alternatives to Bonferroni-type corrections we have two conditions that we would like to have for any design. First, the individual metric results should have CI's that maps to the decision rule outcome.
Second, the design should allow for analytical required sample size calculations. 
These conditions means that most alternative multiple correction methods, such as \cite{holm1979} and \cite{hommel1988} are discarded. Very few methods have confidence intervals, and even fewer are possible to integrate in the sample size calculation, due to their reliance on the p-values.  

One method that fulfills our conditions is Nyholt's method for calculating the so-called effective number of independent tests in a set of tests with arbitrary covariance structure \citep{nyholt2004}. This method is frequently applied in high-dimensional genome testing, and has been refined by several, see e.g. \cite{li2005adjusting, galwey2009new}. Nyholt's method is simple, the effective number of independent tests is given by 
\begin{equation}
M_{E} = 1 + (M-1)\left( 1-\frac{V_\lambda}{M} \right),
\end{equation}
where $M$ is the total number of tests and $V_\lambda$ is the sample variance of the eigenvalues of the covariance matrix.
We can incorporate the efficient number of independent tests in our design to be less conservative. Let $M_{E}(\boldsymbol{\Sigma})$ be the effective number of independent tests for a covariance matrix $\boldsymbol{\Sigma}$. Then we can simply update $\alpha^*=\frac{\alpha}{M_{E}(\boldsymbol{\Sigma}_\mathcal{S})}$ and similarly 
\begin{align*}
    \beta^*=\begin{cases} 
    \frac{\beta - \phi}{(1 -\phi)(M_{E}(\boldsymbol{\Sigma}_\mathcal{G})+1)} &   S>0\\ 
    \frac{\beta - \phi}{(1 -\phi)M_{E}(\boldsymbol{\Sigma}_\mathcal{G})}&   S=0. \end{cases}
\end{align*} 

\subsubsection{Monte Carlo simulation results with the Nyholt method}
Below the tables from Section \ref{sec:MC} are repeated together with the result using Nyholts method as described above. As expected, the method improves the efficiency. This is shown by type I error rates and power closer to the intended rates under certain covariance structures. 
\begin{table*}[t]
\centering
\begin{tabular}{rrccccc}
Covariance & Correction & $R_\mathcal{S}$ & $R_\mathcal{G}$ & $R_\mathcal{D,S}\bigcup R_\mathcal{D,G}$ & $R_\mathcal{D}\bigcup R_\mathcal{Q}$ & D. Rule 2\\ 
   \hline
Independent & None & 0.227 & 0.000 & 0.999 & 0.186 & 0.000 \\ 
  Independent & Only Alpha & 0.049 & 0.000 & 0.866 & 0.014 & 0.000 \\ 
  Independent & Prop. 4.1 & 0.049 & 0.000 & 0.999 & 0.014 & 0.000 \\ 
  Independent & Prop. 4.1+Nyholt & 0.049 & 0.000 & 0.999 & 0.014 & 0.000 \\ 
  Dependent & None & 0.063 & 0.040 & 0.773 & 0.186 & 0.032 \\ 
  Dependent & Only Alpha & 0.014 & 0.040 & 0.376 & 0.014 & 0.014 \\ 
  Dependent & Prop. 4.1 & 0.014 & 0.040 & 0.771 & 0.014 & 0.014 \\ 
  Dependent & Prop. 4.1+Nyholt & 0.059 & 0.040 & 0.692 & 0.033 & 0.038 \\ 
  Block 1 & None & 0.227 & 0.040 & 0.825 & 0.186 & 0.006 \\ 
  Block 1 & Only Alpha & 0.050 & 0.040 & 0.388 & 0.014 & 0.002 \\ 
  Block 1 & Prop. 4.1 & 0.050 & 0.040 & 0.776 & 0.014 & 0.002 \\ 
  Block 1 & Prop. 4.1+Nyholt & 0.050 & 0.040 & 0.625 & 0.020 & 0.002 \\ 
  Block 2 & None & 0.063 & 0.000 & 0.999 & 0.186 & 0.000 \\ 
  Block 2 & Only Alpha & 0.014 & 0.000 & 0.865 & 0.014 & 0.000 \\ 
  Block 2 & Prop. 4.1 & 0.014 & 0.000 & 0.999 & 0.014 & 0.000 \\ 
  Block 2 & Prop. 4.1+Nyholt & 0.059 & 0.000 & 0.999 & 0.020 & 0.000 \\ 
   \hline
\end{tabular}
\caption{Simulation results of the type I error rates under the null hypotheses of the non-inferiority and the superiority tests, respectively. The rejection of the deterioration, non-inferiority, and superiority tests are presented along with the rejection rate of Decision Rule \ref{dr2}. }
\end{table*}

\begin{table*}[t]
\centering
\begin{tabular}{llccccc}
  \hline
Covariance & Correction & $R_\mathcal{S}$ & $R_\mathcal{G}$ & $R_\mathcal{D,S}\bigcup R_\mathcal{D,G}$ & $R_\mathcal{D}\bigcup R_\mathcal{Q}$ & D. Rule 2 \\ 
  \hline
Independent & None & 0.227 & 0.328 & 0.399 & 0.184 & 0.047 \\ 
  Independent & Only Alpha & 0.050 & 0.328 & 0.035 & 0.015 & 0.016 \\ 
  Independent & Prop. 4.1 & 0.050 & 0.877 & 0.035 & 0.015 & 0.042 \\ 
  Independent & Prop. 4.1+Nyholt & 0.050 & 0.877 & 0.035 & 0.015 & 0.042 \\ 
  Dependent & None & 0.062 & 0.765 & 0.068 & 0.184 & 0.051 \\ 
  Dependent & Only Alpha & 0.013 & 0.765 & 0.006 & 0.015 & 0.013 \\ 
  Dependent & Prop. 4.1 & 0.013 & 0.966 & 0.006 & 0.015 & 0.013 \\ 
  Dependent & Prop. 4.1+Nyholt & 0.058 & 0.898 & 0.012 & 0.032 & 0.056 \\ 
  Block 1 & None & 0.228 & 0.765 & 0.272 & 0.184 & 0.113 \\ 
  Block 1 & Only Alpha & 0.048 & 0.765 & 0.023 & 0.015 & 0.036 \\ 
  Block 1 & Prop. 4.1 & 0.048 & 0.966 & 0.023 & 0.015 & 0.045 \\ 
  Block 1 & Prop. 4.1+Nyholt & 0.048 & 0.892 & 0.031 & 0.020 & 0.042 \\ 
  Block 2 & None & 0.062 & 0.329 & 0.272 & 0.184 & 0.016 \\ 
  Block 2 & Only Alpha & 0.013 & 0.329 & 0.023 & 0.015 & 0.004 \\ 
  Block 2 & Prop. 4.1 & 0.013 & 0.878 & 0.023 & 0.015 & 0.011 \\ 
  Block 2 & Prop. 4.1+Nyholt & 0.058 & 0.859 & 0.031 & 0.020 & 0.048 \\ 
   \hline
\end{tabular}
\caption{Simulation results of the rejection rates under the alternative hypothesis of the non-inferiority and the null hypothesis of superiority tests. The rejection rates of the deterioration, non-inferiority, and superiority tests are presented along with the rejection rate of Decision Rule \ref{dr2}.}
\end{table*}

\begin{table*}[t]
\centering
\begin{tabular}{llccccccc}
  \hline
Covariance & Correction & $R_\mathcal{S}$ & $R_\mathcal{G}$ & $R_\mathcal{D,S}\bigcup R_\mathcal{D,G}$ & $R_\mathcal{D}\bigcup R_\mathcal{Q}$ & D. Rule 2 \\ 
  \hline
Independent & None & 1.000 & 0.328 & 0.229 & 0.184 & 0.255 \\ 
  Independent & Only Alpha & 1.000 & 0.328 & 0.018 & 0.014 & 0.323 \\ 
  Independent & Prop. 4.1 & 1.000 & 0.874 & 0.017 & 0.014 & 0.859 \\ 
  Independent & Prop. 4.1+Nyholt & 1.000 & 0.874 & 0.017 & 0.014 & 0.859 \\ 
  Dependent & None & 0.832 & 0.766 & 0.064 & 0.184 & 0.618 \\ 
  Dependent & Only Alpha & 0.832 & 0.766 & 0.005 & 0.014 & 0.756 \\ 
  Dependent & Prop. 4.1 & 0.980 & 0.966 & 0.005 & 0.014 & 0.952 \\ 
  Dependent & Prop. 4.1+Nyholt & 0.935 & 0.898 & 0.011 & 0.031 & 0.869 \\ 
  Block 1 & None & 1.000 & 0.766 & 0.067 & 0.184 & 0.616 \\ 
  Block 1 & Only Alpha & 1.000 & 0.766 & 0.005 & 0.014 & 0.756 \\ 
  Block 1 & Prop. 4.1 & 1.000 & 0.966 & 0.005 & 0.014 & 0.952 \\ 
  Block 1 & Prop. 4.1+Nyholt & 1.000 & 0.892 & 0.007 & 0.019 & 0.874 \\ 
  Block 2 & None & 0.832 & 0.329 & 0.227 & 0.184 & 0.213 \\ 
  Block 2 & Only Alpha & 0.832 & 0.329 & 0.017 & 0.014 & 0.270 \\ 
  Block 2 & Prop. 4.1 & 0.980 & 0.874 & 0.017 & 0.014 & 0.842 \\ 
  Block 2 & Prop. 5.1+Nyholt & 0.977 & 0.855 & 0.024 & 0.019 & 0.814 \\ 
   \hline
\end{tabular}
\caption{Simulation results of the type I error rate under the alternative hypotheses of the non-inferiority and the superiority tests, respectively. The rejection rates of the deterioration, non-inferiority, and superiority tests are presented along with the rejection rate of Decision Rule \ref{dr2}.}
\end{table*}

\end{appendix}

%%%%%%%%%%%%%%%%%%%%%%%%%%%%%%%%%%%%%%%%%%%%%%
%% Support information, if any,             %%
%% should be provided in the                %%
%% Acknowledgements section.                %%
%%%%%%%%%%%%%%%%%%%%%%%%%%%%%%%%%%%%%%%%%%%%%%
%\begin{acks}[Acknowledgments]
% The authors would like to thank ...
%\end{acks}

%%%%%%%%%%%%%%%%%%%%%%%%%%%%%%%%%%%%%%%%%%%%%%
%% Supplementary Material, including data   %%
%% sets and code, should be provided in     %%
%% {supplement} environment with title      %%
%% and short description. It cannot be      %%
%% available exclusively as external link.  %%
%% All Supplementary Material must be       %%
%% available to the reader on Project       %%
%% Euclid with the published article.       %%
%%%%%%%%%%%%%%%%%%%%%%%%%%%%%%%%%%%%%%%%%%%%%%
%\begin{supplement}
%\stitle{???}
%\sdescription{???.}
%\end{supplement}

\begin{acks}[Acknowledgments]
The authors would like to thank Bob Wilson for feedback on earlier drafts of this paper.
\end{acks}

\pagebreak

%% if your bibliography is in bibtex format, uncomment commands:
\bibliographystyle{imsart-number} % Style BST file (imsart-number.bst or imsart-nameyear.bst)
\bibliography{library}       % Bibliography file (usually '*.bib')

\end{document}